\documentclass{article}

\usepackage{arxiv}

\usepackage[utf8]{inputenc} 
\usepackage[T1]{fontenc}    
\usepackage{hyperref}       
\usepackage{url}            
\usepackage{booktabs}       
\usepackage{amsfonts}       
\usepackage{nicefrac}       
\usepackage{microtype}      
\usepackage{amssymb}        
\usepackage{lipsum}
\usepackage{subfigure}
\usepackage{graphicx}
\usepackage{marvosym}
\usepackage{wrapfig} 
\usepackage{amsmath}
\usepackage{empheq}
\usepackage{subcaption}
\graphicspath{ {./images/} }

\usepackage{xcolor}
\hypersetup{hidelinks,
	colorlinks=true, linkcolor=purple, urlcolor=teal, citecolor=cyan!60!black,
	pdfstartview=Fit,
	breaklinks=true}

\usepackage{amsthm}
\newtheorem{theorem}{Theorem}
\newtheorem{lemma}[theorem]{Lemma}
\theoremstyle{definition}
\newtheorem{definition}{Definition}[subsection]

\usepackage{fancybox}
\usepackage{tikz}
\newcommand*\circled[1]{\tikz[baseline=(char.base)]{
            \node[shape=circle,draw,inner sep=1pt] (char) {\scriptsize #1};}}

\usepackage{multirow}
\usepackage{multicol}
\usepackage{booktabs}
\usepackage{appendix}

\title{Lmte: Putting the “Reasoning” into WAN Traffic Engineering with Language Models\thanks{Correspondence to: Yan Qiao and Zonghui Wang. This work (accepted by IEEE INFOCOM 2026 - IEEE Conference on Computer Communications) raises no ethical concerns. All network traffic matrices used in this paper contain no personally identifiable information (PII).}}

\author{
 Xinyu Yuan \\
  College of Computer Science and Technology\\
  Zhejiang University\\
  Hangzhou, China\\
  \texttt{yxy5315@gmail.com} \\
   \And
 Yan Qiao\textsuperscript{\Letter} \\
  School of Computer Science and Infomation Engineering\\
  Hefei University of Technology\\
  Hefei, China\\
  \texttt{qiaoyan@hfut.edu.cn} \\
  \And
 Zonghui Wang\textsuperscript{\Letter} \\
  College of Computer Science and Technology\\
  Zhejiang University\\
  Hangzhou, China\\
  \texttt{zhwang@zju.edu.cn} \\
  \And
 Meng Li \\
  School of Computer Science and Infomation Engineering\\
  Hefei University of Technology\\
  Hefei, China\\
  \texttt{mengli@hfut.edu.cn} \\
  \And
 Wenzhi Chen \\
  College of Computer Science and Technology\\
  Zhejiang University\\
  Hangzhou, China\\
  \texttt{chenwz2@zju.edu.cn} \\
}

\begin{document}
\maketitle

\begin{abstract}
The rapid expansion of modern wide-area networks (WANs) has made traffic engineering (TE) increasingly challenging, as traditional solvers struggle to keep pace. Although existing offline ML-driven approaches accelerate TE optimization with deep neural networks (DNNs), they often lack sufficient expressiveness and generalization on unseen traffic patterns or topologies, limiting their practicality. Inspired by the success of large language models (LMs), for the first time, this paper investigates their potential as general-purpose traffic planners. Our contributions are two-fold: 
(i) Theoretically, we show that pre-trained LMs can simulate the sequential decision processes underlying TE and, crucially, exhibit parallel reasoning capabilities, making them well-suited for the task; (ii) Practically, we present \textsc{Lmte}, a novel LM-driven TE framework that embraces these insights through efficient multimodal alignment and lightweight configuration generation, all while preserving the model's original abilities. Extensive experiments demonstrate that \textsc{Lmte} matches top-tier performance on five datasets, achieving up to 15\% better  maximum link utilization (MLU) and consistently lower performance degradation across diverse scenarios, e.g., less than 5\% with high traffic dynamics and link failures. Moreover, it achieves 10 to 100 times speedups over traditional TE solvers. To aid future works, our codebase is available at \href{https://github.com/Y-debug-sys/LMTE}{\texttt{https://github.com/Y-debug-sys/LMTE}}.
\end{abstract}


\section{Introduction}
\label{sec:introduction}

Modern wide-area networks (WANs) interconnect geographically distributed data centers using high-capacity optical links, forming a critical yet expensive physical infrastructure~\cite{jain2013b4,poutievski2022jupiter,abuzaid2021contracting}. To ensure high availability and low latency, traffic engineering (TE) plays a vital role in managing network performance. Typically planned by a centralized Software-Defined Networking (SDN) controller, TE periodically solves mathematical optimization problems to route traffic efficiently in the face of dynamic topology constraints and fluctuating service demands. This topic has been extensively studied across a variety of network environments~\cite{abuzaid2021contracting,narayanan2021solving,wang2006cope,kumar2018semi,applegate2003making,perry2023dote,xu2023teal,wang1999explicit,alqiam2024transferable}.

\begin{wrapfigure}[16]{r}{0.5\textwidth}
  \vspace{-0.25cm}
  \begin{center}
  \includegraphics[width=1.\linewidth]{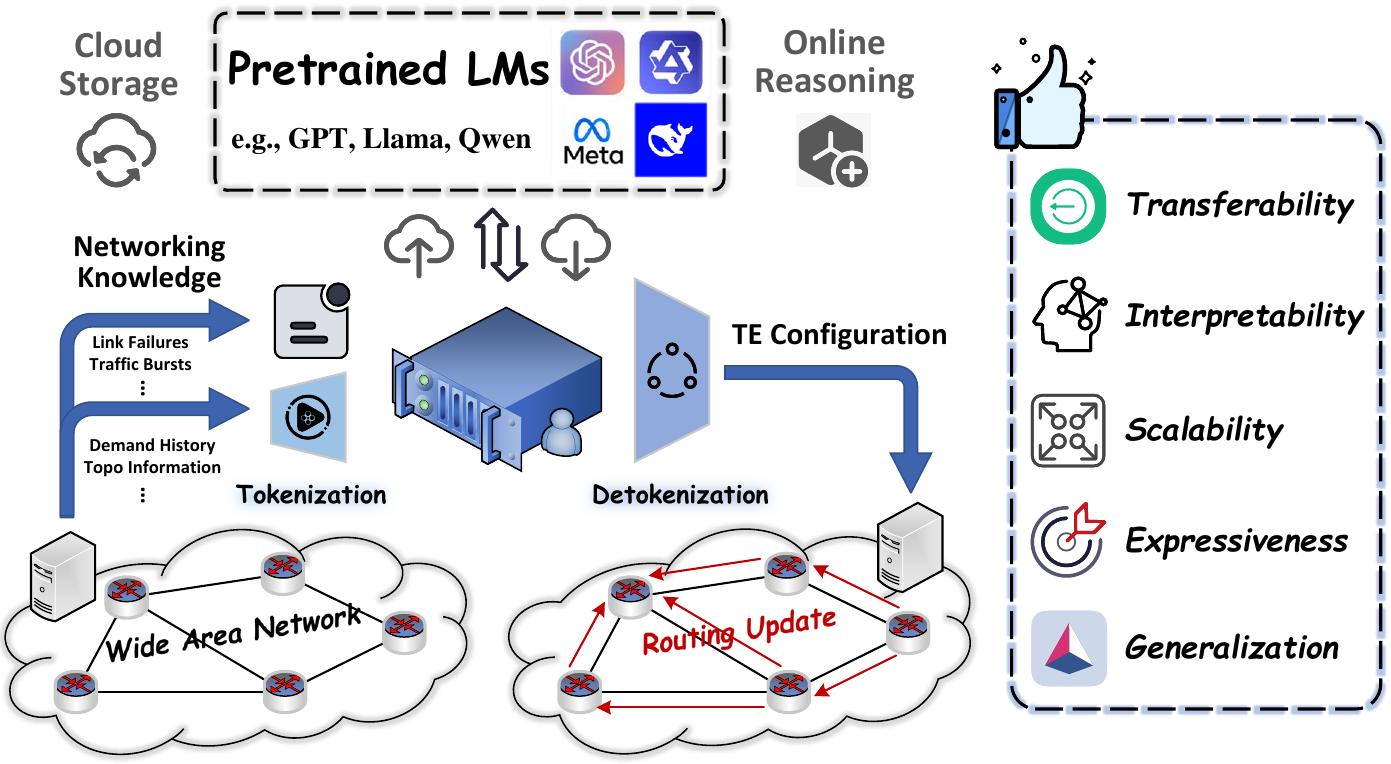}
\caption{Our ideal LM-driven TE system design for software-defined WANs.}
\label{fig:lm_in_cloud}
\end{center}
\end{wrapfigure}

After decades of research, there are still two major issues for conventional WAN TE relying on optimization tools like linear programming (LP): performance instability and computational complexity~\cite{abuzaid2021contracting,wang2006cope}. Recently, the growing presence of machine learning (ML) provides a second option: leveraging historical demands to quickly output a good routing scheme for future conditions~\cite{valadarsky2017learning}. These ML-based algorithms train deep neural networks (DNNs) using reinforcement learning (RL)~\cite{hope2021gddr,xu2023teal} or end-to-end supervised learning~\cite{perry2023dote,liu2024figret} to automatically infer WAN TE configurations, thereby eliminating the need for iterative optimization processes. 
Although promising, their imperfections can not be overlooked. As network environments evolve (e.g., due to topology changes or bursty traffic), the learned black-box mappings often fail to remain valid. This lack of generalization hinders the trustworthiness of their practical deployments. Consequently,  offline-trained DNNs soon become outdated under new conditions, necessitating frequent retraining. Besides, it is not uncommon to encounter caveats such as the following statement, which reflects a persistent yet unresolved concern over their models’ limited expressiveness:
\begin{center}
    ``$\mathsf{Our \ realization \ of \ DOTE \ uses \ a \ relatively \ simple \ NN. \cdots \ More \ expressive \ NN \ architectures}$ 
      $\mathsf{could  \ potentially \ lead \ to}$ $\mathsf{faster \ training \ and \ better \ quality \ solutions.}$” 
      $\quad \quad$- $\mathsf{Perry \ et \ al.}$~\cite{perry2023dote}
\end{center}
To address these shortcomings, we now turn to a brand-new paradigm that capitalizes on the latest breakthroughs in the ML community—namely, large language models (LMs) such as ChatGPT~\cite{chatgpt} and LLaMA~\cite{llama3modelcard}.

\textbf{LMs for WAN TE, Why?} These powerful models, with billions of parameters pre-trained on massive text corpora, have achieved remarkable \textit{generalization} capabilities in natural language processing (NLP). Encouragingly, recent work~\cite{wu2024netllm} has shown that LMs are \textit{transferable} to networking tasks, including viewport prediction, video streaming, and cluster scheduling. These successes suggest that LMs may be a good fit for WAN TE as well. To move beyond the intuition, we give a formal theoretical justification for their applicability to TE. Our core insight is that LMs solve complex problems in a manner akin to human reasoning~\cite{webb2023emergent, gonzalez2024does}, while TE falls into the category that benefits from such ability: it involves making sequential decisions, i.e., a solvable automaton (\S\ref{te_problem_setup}), despite its NP-hardness. This perspective lays the foundation for delving into the usage of LMs (\S\ref{lm_te_insight}). Specifically, we begin by investigating their \textit{expressiveness} when applied to state-to-state transitions within TE automata. As it turns out, an off-the-shelf LM, when exposed to sufficiently diverse data, can approximate any mapping from an arbitrary initial state to the near-optimal TE configuration.
Moreover, we provably show that LMs can simulate such mappings with only logarithmic depth. This capacity to abstract complex optimization procedures into conceptual leaps exemplifies \textit{reasoning}, further underscoring the suitability of LMs for WAN TE.

\begin{wrapfigure}[12]{r}{0.66\textwidth}
  \vspace{-0.5cm}
  \begin{center}
  \includegraphics[width=1.\linewidth]{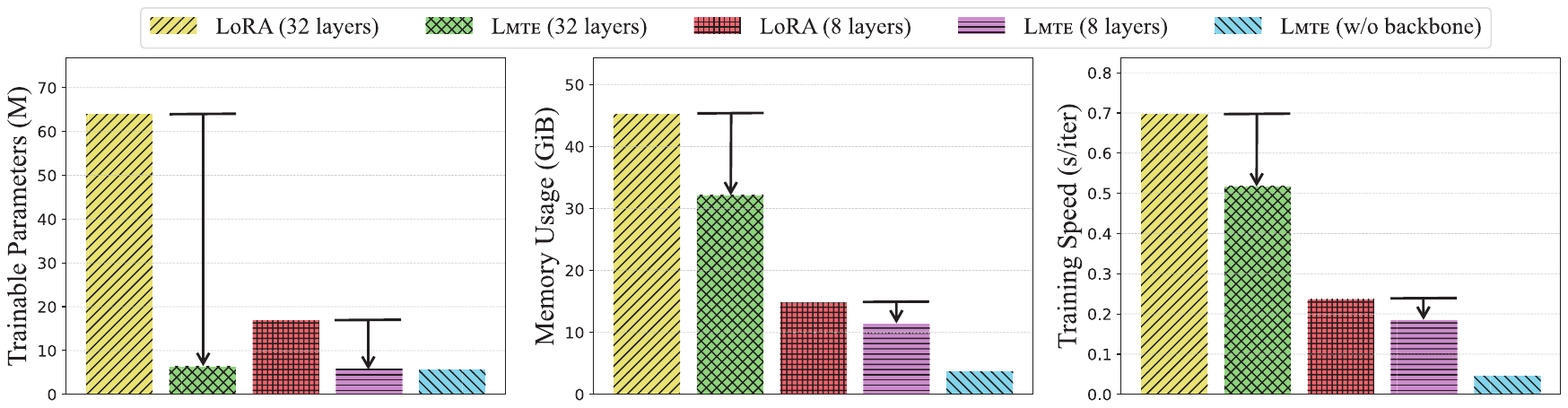}
\caption{Comparison of adaptation costs on a real-world topology: full-parameter fine-tuning of LLaMA-7B incurs high overhead, while \textsc{Lmte} is even more lightweight than LoRA by keeping the LM's backbone intact.}
\label{fig:lm_te_param}
\end{center}
\end{wrapfigure}

\textbf{LMs for WAN TE, How?} At first glance, our theoretical setup appears tractable for a vanilla LM. Unfortunately, the modalities of TE system inputs, i.e., historical traffic and topology information, differ significantly from the NL supported by LMs~\cite{wu2024netllm}. In addition, simply fine-tuning LMs is prohibitively expensive for TE applications—not only requiring substantial computational resources, but also gradually impairing their ability to generalize. To overcome these challenges, we introduce \textsc{Lmte} (\S\ref{lm_te_framework}), the \textit{first} framework designed to efficiently adapt LMs for TE optimization. \textsc{Lmte} enables LMs to understand both topology and traffic demands while retaining their pretrained knowledge. The key idea is to directly align inputs from multimodal encoders with prototype textual representations based on cross attention mechanism, progressively guiding the LM to interpret the task during adaptation (as visualized in \S\ref{subsec:visulization}). Further, we construct a domain-aware prompt template by preserving the LM's linguistic capabilities, to enhance \textit{interpretability} and \textit{robustness} under unseen cases. To support \textit{scalability}, \textsc{Lmte} allocates router-wise demands individually through a shared head network. As illustrated in Figure~\ref{fig:lm_in_cloud}, our envisioned LM-driven TE system takes as input high-level instructions and real-time data to support on-the-fly reasoning. Note that most computation is offloaded to the cloud, with the backbone model kept fixed. 
Meanwhile, Figure~\ref{fig:lm_te_param} shows \textsc{Lmte} achieves highly lightweight adaptation, e.g., only around 1\% local trainable parameters in the 32-layer LLaMA-7B model, even compared to methods like LoRA~\cite{wu2024netllm}.

We conduct a comprehensive evaluation of \textsc{Lmte} across diverse settings (\S\ref{exp_section}), including three real-world networks with publicly available traffic matrices and two large-scale topologies with synthetic data. Experimental results demonstrate that \textsc{Lmte} consistently outperforms state-of-the-art (SOTA) TE algorithms, particularly ML-based methods, achieving on average a 10\% reduction in maximum link utilization (MLU). By leveraging the LM’s pretrained knowledge and reasoning ability, \textsc{Lmte} generalizes well to a range of challenging and previously unseen scenarios. For instance, in comparison with \textsc{FIGRET}~\cite{liu2024figret}, a recently proposed robustness-enhanced method, \textsc{Lmte} surpasses it by up to 19\%, 14\%, and 21\% in performance under network failures, abrupt traffic, and distributional drift, respectively, on G\'{E}ANT dataset.
Furthermore, \textsc{Lmte} also offers efficiency gains—achieving over $10 \sim 100 \times$ faster runtime than LP solvers on two large topologies. 

\vspace{-0.2cm}
\paragraph{Our contributions} In summary, we make the following contributions:
\begin{itemize}
    \item We provide a theoretical analysis of the applicability of LM–based traffic engineering through a simplified automaton model, shedding light on how their reasoning capability may enable improved generalization in downstream tasks.
    \item We put forward \textsc{Lmte} as the first LM-driven traffic engineering framework for WANs, combining domain knowledge with data-driven planning through large language models.
    \item We extensively validate our proposal on various WAN traffic matrix datasets with pre-trained language models (i.e., the LLaMA family), achieving consistent performance gains over relevant baselines.
\end{itemize}

\section{Related Works \& Preliminaries}

\paragraph{Classical TE}
(WAN) traffic engineering is a well-known and established topic whose fundamental goal is to control and manage networks efficiently, playing a critical role in service and cloud provider infrastructures~\cite{jain2013b4,abuzaid2021contracting,narayanan2021solving,wang2006cope,kumar2018semi,applegate2003making,perry2023dote,xu2023teal,wang1999explicit,alqiam2024transferable,kumar2015bwe,racke2002minimizing,azar2003optimal}. Traditional TE solutions make routing decisions relying on linear programming techniques to achieve the optimal or sub-optimal performance~\cite{narayanan2021solving,abuzaid2021contracting,kumar2015bwe,wang2006cope,racke2002minimizing,applegate2003making,azar2003optimal}. Most existing algorithms fall into a category we refer to as prediction-based TE~\cite{jain2013b4,narayanan2021solving,abuzaid2021contracting}. These approaches collect a set of sampled traffic matrices and compute routing plans based on them. As a result, when actual traffic deviates significantly from the samples, the computed routing may perform poorly. In contrast to adaptive TE, oblivious routing seeks to optimize worst-case performance guarantees across all possible demand matrices~\cite{racke2002minimizing,applegate2003making,azar2003optimal}, thereby offering robust but highly sub-optimal performance for normal cases. 
An improved work is COPE~\cite{wang2006cope}, which achieves MLU optimization within the space spanned by historical TMs without compromising worst-case performance bounds. Unfortunately, solving the associated mathematical programs is still computationally expensive, as the number of variables grows rapidly with network size. 
Although NCFlow~\cite{abuzaid2021contracting} and POP~\cite{narayanan2021solving} significantly speed up computations via problem decomposition and parallelization, their methods inherit the limitations of demand prediction.

\paragraph{ML-Driven TE}
Recent advances in deep learning have spurred extensive research into ML–driven solutions for traffic engineering problems~\cite{valadarsky2017learning,xu2018experience,zhang2020cfr,hope2021gddr,bernardez2021machine,xu2023teal,perry2023dote,liu2024figret}. Not only is inference using neural models much faster than traditional models but also they offer the potential for autonomously managing
complex networked systems. The first class is RL-based methods~\cite{valadarsky2017learning,xu2018experience,zhang2020cfr,hope2021gddr,bernardez2021machine,xu2023teal}, typically using demand-prediction and RL approaches. While widely adopted in present-day literature, this family of solutions typically incurs high training overhead and exhibits sensitivity to parameter choices. The second class employs end-to-end optimization~\cite{perry2023dote,liu2024figret,alqiam2024transferable}, constituting a training strategy closely aligned with ours. Following the approach of DOTE~\cite{perry2023dote}, this class of schemes trains a simple decision model using historical traffic demands to directly produce SOTA configurations. Most recently, FIGRET~\cite{liu2024figret} further enhanced the robustness by incorporating a regularization term. While they achieve runtimes several orders of magnitude faster than solving a linear program, (unlike \textsc{Lmte}) whether their DNNs can serve as generalizable decision-makers needs further investigation. Concurrent with this work, \textsc{Pram}~\cite{yuan2026divide} establishes the state of the art for solving multi-commodity flow problems using multimodal language models. In contrast, our work explores a different design point in which only language models are used to support reasoning for WAN traffic engineering, without relying on additional input modalities (e.g., images).


\paragraph{Automata Theory}
A (deterministic) \textit{automaton} is formally defined as a triple $\mathcal{A} := \left(\Sigma, \mathcal{Q}, \delta\right)$, where $\Sigma$ represents a finite input alphabet (the set of allowed symbols), $\mathcal{Q}$ denotes a finite set of states, and $\delta: \mathcal{Q} \times \Sigma \rightarrow \mathcal{Q}$ is the total transition function that uniquely determines the next state for each current state and input symbol combination. $\mathcal{A}$ becomes an acceptor when equipped with an initial state $q_0 \in \mathcal{Q}$ and a set of accepting states. For any positive integer $T$ and $q_0$, the automaton induces a mapping from input sequences $(\sigma_1, \dots, \sigma_T) \in \Sigma^T$ to state sequences $(q_1, \dots, q_T) \in \mathcal{Q}^T$ defined by the recurrence relation $q_t := \delta
\left(q_{t-1}, \sigma_t\right), \forall t \in \left[1, \dots, T\right]$.

\paragraph{LMs Architecture}
In this paper, we adopt the prevailing Transformer-based architecture for LMs, using a decoder-only structure. An \textit{l}-layer Transformer~\cite{vaswani2017attention} is a sequence-to-sequence network, consisting of alternating self-attention blocks and feedforward MLP blocks. Concretely, we consider:
\[
    {\text{LM}} = f_{c} \circ f_{mlp}^{(l)} \circ f_{attn}^{(l)} \circ \cdots \circ f_{mlp}^{(1)} \circ f_{attn}^{(1)} \circ {\text{PE}},
\]
where $f_{c}: \mathbb{R}^{T \times D} \times \Theta_{c} \rightarrow \mathbb{R}^{T \times D}$ denotes classifier output layer and ${\text{PE}}$ is positional embedding. The self-attention layer first projects an input $\boldsymbol{H} \in \mathbb{R}^{T \times D}$ into three subspaces, namely $\boldsymbol{Q} \in \mathbb{R}^{T \times D_q}$, $\boldsymbol{K} \in \mathbb{R}^{T \times D_k}$ and $\boldsymbol{V} \in \mathbb{R}^{T \times D_v}$. Then the output is calculated as \( \boldsymbol{H}' = {\text{\scshape{softmax}}} \left( \frac{\boldsymbol{Q}\boldsymbol{K}^\top}{\sqrt{D_k}} \right)\boldsymbol{V} \). Specifically, it calculates the dot product between each token pair after projection, and the softmax is applied row-wise.

\section{TE Problem Setup: An Automata Perspective}
\label{te_problem_setup}

To lay the groundwork for the discussions that follow, we begin with a standard formulation of the TE system, outlining its basic components and objectives. We then recast the iterative optimization problem through an automata-theoretic lens, offering a structured perspective on its sequential dynamics and decision-making processes.

\subsection{Problem Formulation}

\begin{wrapfigure}[15]{r}{0.58\textwidth}
\vspace{-0.6cm}
\centering
\subfigure[Toy network topology\label{fig:te_problem_1}]{
    \includegraphics[width=0.3\textwidth]{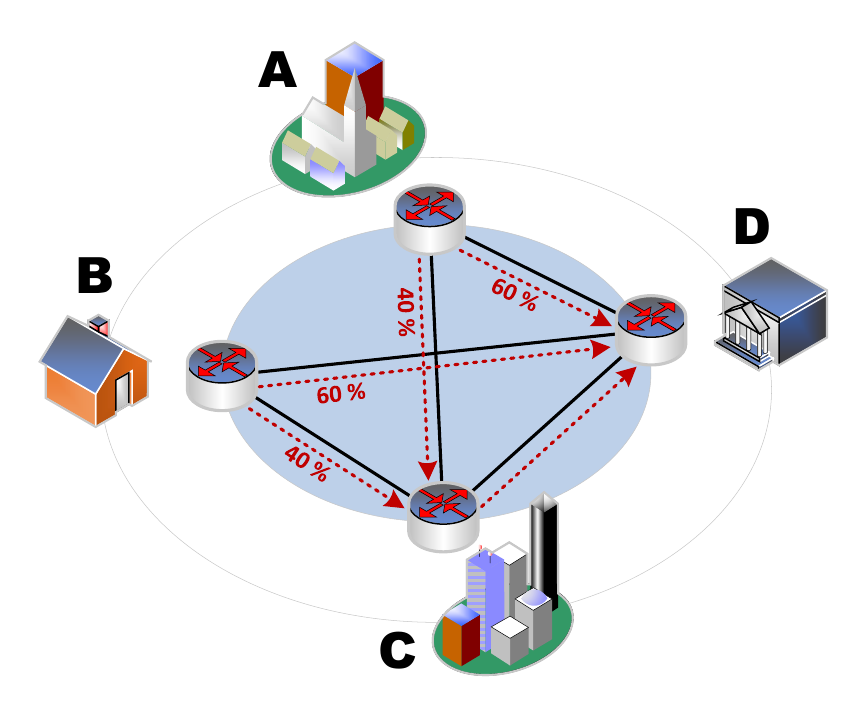}
}
\hfill
\subfigure[MLU function of the splitting\label{fig:te_problem_2}]{
    \includegraphics[width=0.25\textwidth]{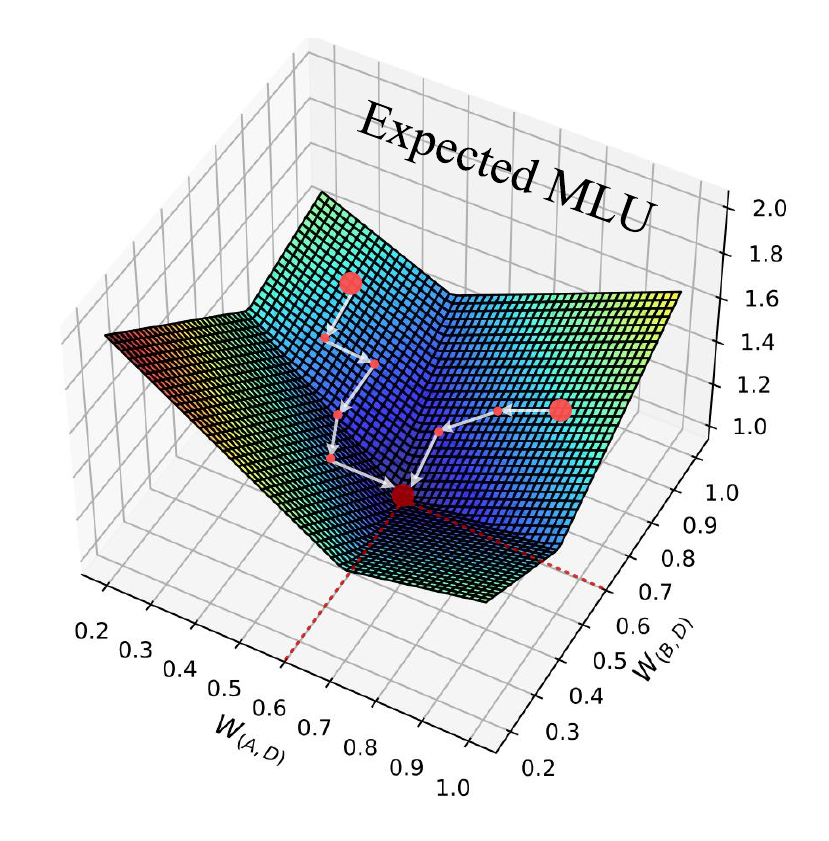}
}
\caption{Illustrative network traffic engineering (TE) example\textsuperscript{\cite{perry2023dote}}.}
\label{fig:te_problem}
\end{wrapfigure}

We model the network as a directed graph $\mathcal{G}(\mathcal{V},\mathcal{E},c)$ where vertices $\mathcal{V}$ represent network nodes (routers or switches), edges $\mathcal{E}$ correspond to directed communication links, and $c:\mathcal{E}\to\mathbb{R}^+$ assigns capacity to each link. Each origin-destination (OD) pair $(s,t)$ communicates via \textit{pre-selected} path sets $P_{s,t}$, called \textit{tunnels}. The \textit{traffic matrix} (TM) $\mathcal{D} = [\mathcal{D}_{i,j}] \in \mathbb{R}_{\geq 0}^{|\mathcal{V}|\times|\mathcal{V}|}$ provides a complete demand specification, where $\mathcal{D}_{i,j}$ represents the traffic demand of OD pair $(i,j)$ per time interval. Now given measured TMs, a \textit{traffic engineering configuration} $\mathcal{R}$ determines the mapping $\mathcal{D}_{s,t} \rightarrow \{r_p\}_{p\in P_{s,t}}$ where $r_p$ is the split ratio on tunnel $p$ for next time interval. Note that the split ratios should satisfy \( \sum_{p\in P_{s,t}} r_p = 1.0 \). 

This paper focuses on maximum link utilization (MLU)~\cite{azar2003optimal,kandula2005walking,perry2023dote}, a classical TE objective, which refers to the maximum value of all link utilization ratios in the network. A corresponding TE configuration can be obtained by solving the following mathematical program:
\begin{empheq}[box=\fbox]{equation}
    \begin{aligned}
       \textbf{minimize } & \quad \alpha = \max_{e \in \mathcal{E}} \frac{f_e}{c_e} \\ 
       \textbf{subject to} & \quad f_e = \sum_{s,t \in V} \sum_{p \in P_{s,t}} \sum_{e \ni p} \mathcal{D}_{s,t} \cdot r_p, \quad \forall e \in \mathcal{E};
        \quad r_p \ge 0, \quad \forall s,t \in \mathcal{V}, \ \forall p \in P_{s,t}; \\ 
        & \quad \sum_{p \in P_{s, t}} r_p = 1.0, \quad \forall s,t \in \mathcal{V}. 
    \end{aligned}
    \label{eqn:mlu_formulte}
\end{empheq}

We here illustrate the case using a example network~\cite{perry2023dote}, as depicted in Figure~\ref{fig:te_problem_1}, where all link capacities are set to $1$. At fixed time intervals, the TE system determines traffic splitting ratios for origin nodes \textit{A} and \textit{B}, distributing their demands across two tunnels to destination \textit{D}. The demands of \textit{A} and \textit{B} are independently drawn at each interval from a fixed distribution: with probability $1/2$, \textit{A}'s demand is $5/3$ and \textit{B}'s is $5/6$; otherwise, \textit{A} demands $5/6$ and \textit{B} demands $5/3$. Detailed in~\cite{perry2023dote}, the optimal split ratios for (\textit{A}, \textit{D}) and (\textit{B}, \textit{D}) are achieved at the red point ($0.6$, $0.6$) marked in Figure~\ref{fig:te_problem_2}.
Importantly, this subfigure further shows that the (expected) MLU is \textit{\textbf{convex}} with respect to the split ratios—a desirable property that is not only \textit{commonly assumed} in TE optimization problems~\cite{wang2006cope,perry2023dote}, but also forms the cornerstone of the problem transformation we introduce next.

\subsection{Tailored Problem}

In particular, the TE function in Figure~\ref{fig:te_problem} suggests a discrete optimization trajectory: starting from arbitrary split ratios, the configuration is iteratively adjusted along the curve of (expected) MLU until convergence to a global minimum. This step-by-step adjustment naturally aligns with the concept of a decision chain where each action incrementally improves the TE outcome.
Motivated by the observations, we can model TE configuration selection as an automaton $\mathcal{A}$. The state space $\mathcal{Q}$ is defined as all possible solutions to the TE problem, i.e., the collection of split ratios for the communication tunnels. The input alphabet $\Sigma$ comprises two components: candidate OD pairs and their updated tunnel weights. Let $q_0 \in \mathcal{Q}$ be the WAN condition-dependent initial state. As illustrated in Figure~\ref{fig:automata}, a simple $\mathcal{A}$ processes an input $\sigma_t \in \Sigma$ and transitions to state $q_{t+1}$ by optimizing the selected pair(s)’s allocation at each step $t$, e.g., from (1, 6) to (4, 6). The automaton halts at state $q_T$ once the improvement in MLU falls below a predefined threshold, thereby yielding the final TE configuration. The objective of TE is thus to compute $q_T$, the terminal state of $\mathcal{A}$ after $T$ transitions. We then prove it can be modeled in \textit{finite-length}:
\begin{lemma}
    (TE as Solvable Automata) WAN traffic engineering is NP-hard, but any instance admits a solution through finite-step automaton simulation. 
    \label{lemma:1}
\end{lemma}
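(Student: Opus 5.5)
The statement has two claims, and I would prove them separately: NP-hardness, then solvability by a finite-step automaton.

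For NP-hardness, the continuous program in \eqref{eqn:mlu_formulte} with fixed tunnels is an LP. Hardness therefore has to come from the integral (unsplittable) or discretized variant that the automaton actually manipulates. This variant has finitely many split configurations per OD pair, and each input symbol assigns one OD pair a tunnel weight from a finite set. I would reduce from \textsc{Partition}. Given integers $a_1,\dots,a_n$ with sum $2S$, build a network with one source $s$ and one sink $t$ joined by two parallel tunnels of capacity $S$. Create one OD pair per $a_i$, with demand $a_i$, all sharing these two tunnels. If each demand must be routed on a single tunnel (split ratio in $\{0,1\}$), then minimum MLU $\le 1$ holds iff a partition exists. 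This shows the discrete decision version of MLU-TE is NP-hard. The same gadget with link costs gives the standard hardness results cited for weight-setting TE, and I would mention them as corroboration.

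For finite-step solvability, I would use the discretization implicit in the automaton $\mathcal{A}=(\Sigma,\mathcal{Q},\delta)$. Fix a granularity $1/K$ for split ratios. Then $\mathcal{Q}$, the product over OD pairs of the simplices $\{r\in\frac{1}{K}\mathbb{Z}_{\ge0}^{|P_{s,t}|}:\sum r_p=1\}$, is finite, with $|\mathcal{Q}|\le \prod_{s,t}\binom{K+|P_{s,t}|-1}{|P_{s,t}|-1}$. Define $\delta(q,\sigma)$ to replace the selected pair's allocation with the one given by $\sigma$ if this strictly decreases MLU, using a fixed tie-breaking rule on (MLU, lexicographic order) so that progress is strict, and otherwise to leave $q$ unchanged. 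MLU along any run is nonincreasing, and every non-stationary step strictly decreases the potential (MLU, lex rank) on a finite set. Hence no state repeats among effective transitions, and the automaton halts after at most $|\mathcal{Q}|$ steps, so $T\le|\mathcal{Q}|<\infty$. Moreover, the exhaustive input sequence that enumerates all states can reach the discrete optimum $q^\ast$, so some finite input word maps any $q_0$ to $q^\ast$.

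Finally, I would connect this to the continuous problem. MLU is Lipschitz in $r$ with constant at most $\max_{s,t}\mathcal{D}_{s,t}\cdot\max_e |\{p\ni e\}|/c_e$. The grid optimum is therefore within $O(1/K)$ of the LP optimum, so the halting threshold in the text can be matched by choosing $K$ accordingly. This is where the convexity remark is used: with convexity, coordinate-wise improvements do not get stuck far from the optimum when the alphabet includes joint moves.

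The main obstacle is conceptual rather than computational. The continuous LP is polynomial, so I have to state carefully which variant is NP-hard, namely the discretized/unsplittable one realized by the finite automaton. I also have to show that the finite-step guarantee does not contradict hardness: the bound $|\mathcal{Q}|$ is exponential in the number of OD pairs, so finiteness holds but efficiency does not.
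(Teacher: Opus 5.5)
The gap is in the finite-step half. Your $\delta$ updates a single OD pair and fires only when the potential (MLU, lex rank) strictly drops. A run from $q_0$ therefore only visits states reachable by such monotone single-pair moves, and it ends at a single-pair local optimum. So your claim that an exhaustive input word reaches the discrete optimum $q^\ast$ does not follow. Your own Partition gadget with $K=1$, the variant you prove hard, is a counterexample. Route demands $3,3,2,2,2$ on two tunnel links of capacity $6$, starting from loads $\{3,2,2\}$ and $\{3,2\}$ (MLU $7/6$). Moving a $2$ off the heavy link gives loads $5,7$, a tie. Moving its $3$ gives $4,8$, and moving anything off the light link gives $9,3$ or $10,2$. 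Tie moves only reproduce the same shape, so the automaton stops at $7/6$, although $\{3,3\}\mid\{2,2,2\}$ achieves $1$.

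Your closing appeal to convexity with ``joint moves'' does not repair this, for three reasons. Convexity belongs to the continuous relaxation, not to the $1/K$ grid. Coordinate-wise improvement need not reach the minimum of a nonsmooth convex function such as a maximum of linear forms. And if $\Sigma$ may encode arbitrary joint reassignments, then $\delta(q_0,q^\ast)=q^\ast$ in one step and the statement becomes vacuous.

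What is missing is a specified all-pairs update with a convergence guarantee, which the paper supplies in Lemma~\ref{lemma:4}. It uses the projected subgradient transition $\pi_t=\mathrm{Proj}\{\pi_{t-1}-\eta v_t\}$ on the continuous product of simplices. Convexity of the (expected) MLU then gives an $\varepsilon$-optimal averaged state after finitely many steps (via DOTE's convergence theorem), from any initial state and without knowing the optimum in advance. The price is a continuous rather than finite state space; your grid-plus-Lipschitz estimate could at most be used afterwards to round that output.

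Your hardness half takes a different route from the paper: Partition for the unsplittable variant, instead of the two-commodity integral flow reduction of Lemmas~\ref{lemma:2}--\ref{lemma:3}. Your caveat that Eqn.~\ref{eqn:mlu_formulte} with fixed tunnels is an LP is sound, since the paper's reduction also tacitly relies on integrality. One fix is needed: the model has one demand per OD pair, so the gadget needs distinct sources $s_i$ feeding the two tunnels rather than a single pair $(s,t)$.
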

The proof of Lemma~\ref{lemma:1} can be found in Appendix~\ref{app:lem1}. The first result shows that our automaton-based formulation is generally solvable. However, due to the NP-hardness of the underlying mathematical program, solving it via sequential optimization becomes computationally prohibitive at scale, as illustrated at the top of Figure~\ref{fig:te_automata_problem}. This underscores the need for decision models with sufficient expressive capacity to support practical deployment.

\begin{figure}
    \subfigure[An automata-style iterative optimization of the TE objective\label{fig:automata}]{
        \centering
        \includegraphics[width=0.705\linewidth]{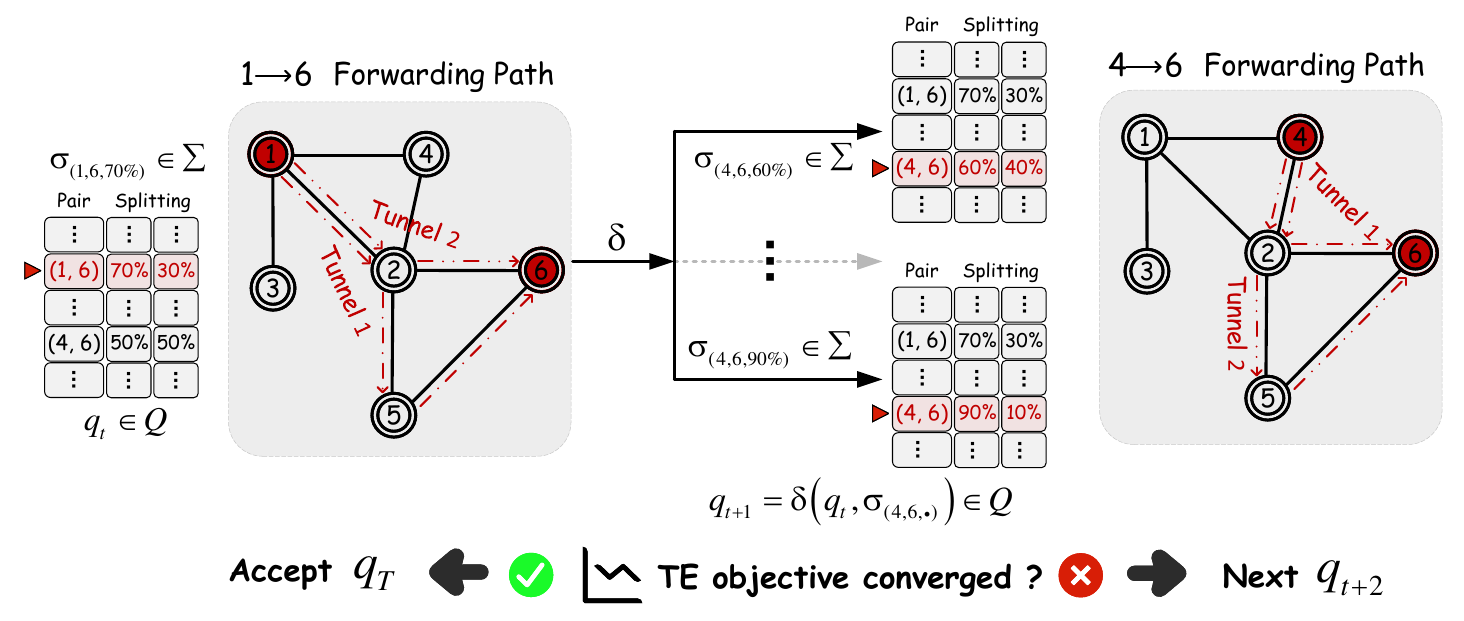}
    }
    \hfill
    \subfigure[TE automata simulation\label{fig:te_automata_problem}]{
        \centering
        \includegraphics[width=0.265\linewidth]{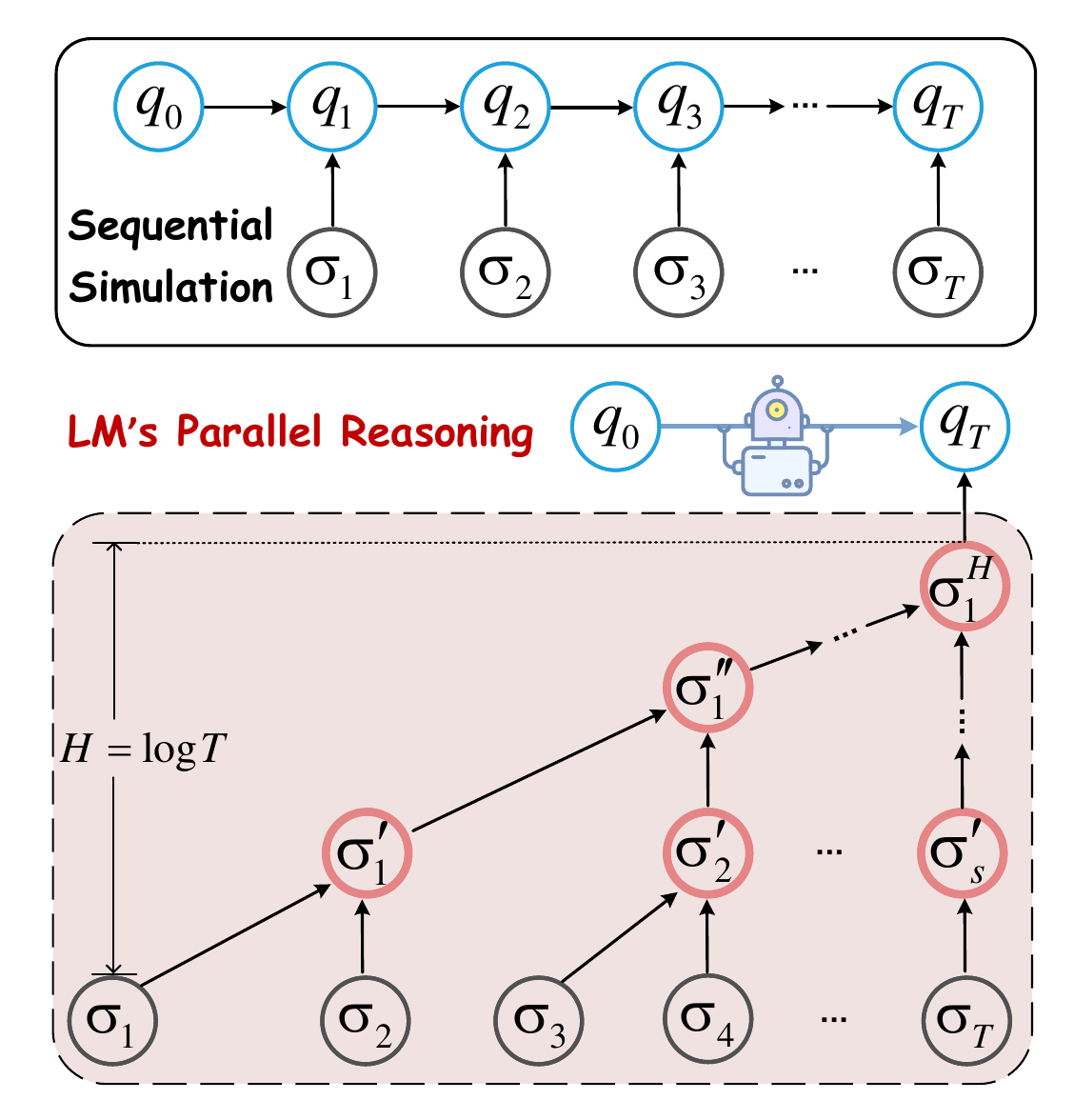}
    }
    \caption{Theoretical intuition and foundations for our LM-driven TE approach. \textbf{(a)} Simple TE example from a solvable automaton perspective (Lemma~\ref{lemma:1}). \textbf{(b)} A pretrained LM can approximate any optimal state from an initial state (Theorem~\ref{theorem:2}), and accelerate automaton simulation with logarithmic-depth (Theorem~\ref{theorem:3}).}
\end{figure}

\section{Harnessing LMs for TE: Theoretical Insights}
\label{lm_te_insight}

We next investigate pre-trained language models, known for their strong generalization and emergent reasoning capabilities~\cite{webb2023emergent}, as a promising decision model to TE. This section provides theoretical insights into their applicability to our automaton-based formulation.

\subsection{LMs Have Enough Expressiveness}

We start by posing a fundamental question: can LMs discover and reach the optimal state in WAN TE? This hinges on whether LMs are expressive enough to capture and simulate the underlying state transitions. To explore this, we consider the TE state $q_i$ of the automaton as compositional sequence vector like $\underbrace{u_1 \ v_1 \ c[u_1][v_1], \dots, d[u_1][v_1], \dots}_{\text{input conditions \& historical demands (prefix)}}$, $\underbrace{r_1[u_1][v_1], \ldots}_{\text{split ratios}}$, analogous to the natural language token. With this representation, we now state the following theorem:
\begin{theorem}
   (LMs as Powerful Simulators) For any \(\epsilon > 0\) and state-to-state function \(f\), we can always find an LM \(h\) satisfying \( d(f(\mathbf{x}), h(\mathbf{x})) \le \epsilon \), where $d$ denotes a discrepancy over states, typically induced by a function norm $\|f - h\|_p$. 
   \label{theorem:2}
\end{theorem}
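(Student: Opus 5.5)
The plan is to reduce Theorem~\ref{theorem:2} to the known universal approximation property of Transformers for continuous sequence-to-sequence functions on compact domains (Yun et al.), instantiated on the TE state space of Lemma~\ref{lemma:1}.

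\textbf{Step 1: make the domain compact and the target continuous.} By Lemma~\ref{lemma:1}, every TE instance is solved by a finite-step automaton, so there is a maximal sequence length $T$ for state encodings. Each token, namely node indices, capacities $c[u][v]$, historical demands $d[u][v]$, and split ratios $r[u][v]\in[0,1]$, takes values in a bounded set, since capacities and demands are bounded in any fixed WAN and the ratios lie in a product of simplices. Embedding these tokens in $\mathbb{R}^D$, the set of states $\mathcal{X}\subset\mathbb{R}^{T\times D}$ is compact. I will take the state-to-state map $f$ either as continuous or as $L^p$-integrable on $\mathcal{X}$. If $f$ is only measurable, such as the composed automaton map $q_0\mapsto q_T$, which may be discontinuous at tie points of the $\max$ in the MLU objective, I first approximate it in $\|\cdot\|_p$ by a continuous $\tilde f$ with $\|f-\tilde f\|_p\le\epsilon/3$, using density of $C(\mathcal{X})$ in $L^p(\mathcal{X})$. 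Convexity of the expected MLU, together with a unique, or consistently selected, minimizer, suggests the optimal-configuration map is actually continuous on most of $\mathcal{X}$, but the density argument avoids needing this.

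\textbf{Step 2: approximate by a piecewise-constant function.} Next I approximate $\tilde f$ by a piecewise-constant map $\bar f$ on a grid of side $\delta$, using uniform continuity, so that the error is at most $\epsilon/3$.

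\textbf{Step 3: realize $\bar f$ with a Transformer.} I follow Yun et al. in three stages:
\begin{itemize}
\item feedforward layers quantize the inputs to the grid;
\item stacked self-attention layers implement a contextual mapping that assigns each quantized sequence a unique per-token identifier;
\item final MLP layers and the classifier $f_c$ map each identifier to the value of $\bar f$.
\end{itemize}
The Transformer approximates hardmax attention by softmax with error $\epsilon/3$. Summing the three errors gives $d(f,h)\le\epsilon$.

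Positional embeddings handle the loss of permutation equivariance. The decoder-only causal mask is handled by noting that contextual mapping still works when each token attends to its prefix, and the final token sees the full prefix. Alternatively, I can restrict outputs to the split-ratio positions, which come after the conditions and demands in the encoding.

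\textbf{Main obstacle.} The hard part is Step 3 under a causal mask with softmax, namely showing that the contextual-mapping construction survives causality. I would first try the workaround of reading all outputs from the suffix positions, where the full prefix is visible. I would also state honestly that the theorem is an existence result, so the depth and width of $h$ may grow with $1/\epsilon$.
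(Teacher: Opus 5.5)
Your proposal is correct and takes essentially the same route as the paper. The paper's sketch also adapts Yun et al.'s three steps (piecewise-constant approximation, realization by a hardmax-based Transformer, replacement of hardmax by softmax) and closes with the triangle inequality; your $L^p$-density step for discontinuous $f$ and your causal-mask discussion add care the paper omits. Two small corrections: the fixed encoding length should come from fixing the topology, not from Lemma~\ref{lemma:1}, which bounds the number of transitions rather than the size of a state. Also, once the density step is included you should split $\epsilon$ four ways unless the hardmax realization is exact (the paper allows an error $C_{\tilde f}$ there).
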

The proof of Theorem~\ref{theorem:2} can be found in Appendix~\ref{app:the2}. Theorem~\ref{theorem:2} implies that, in terms of expressiveness, LMs can simulate general TE automata from $q_0$ to $q_T$, by implicitly encoding $\delta$, $\sigma_{1:T}$ within their parameters, given convergence training on corpora of sufficient richness. 
Nevertheless, as indicated by~\cite{alqiam2024transferable}, trivially deep architecture (e.g., recurrent adjustments with $\mathcal{O}(T)$-depth and appropriate nonlinearities) may still be required to obtain the near-optimal TE configuration. 
This raises the question of whether LMs can leverage their ``reasoning'' power to circumvent the inefficiency.

\subsection{LMs Bring ``Reasoning'' to TE}

To this end, we highlight the parallel computation capabilities of LMs, by extending results from prior works~\cite{sanford2024transformers,liu2023transformers}.
Specifically, any parallelizable TE automaton can be simulated through a tree-structured process (Figure~\ref{fig:te_automata_problem}, bottom), enabling all states to be computed within just $\mathcal{O}(\log T)$ layers. This resolves the earlier question and demonstrates how LMs can perform non-sequential, compositional inference—hallmarks of \textit{reasoning}.
Theorem~\ref{theorem:3} formalizes this key insight, showing that LMs can solve the TE problem with a high probability under logarithmic depth. Taken together with Theorem~\ref{theorem:2}, these results establish the fundamental soundness of LMs for TE.
\begin{theorem}
    (LMs Find TE Shortcuts) An LM with feedforward MLP dimension $D=\mathcal{O}\left(|\mathcal{Q}|^2\right)$, attention dimension $D_{(\cdot)}=\mathcal{O}\left(|\mathcal{Q}|\right)$ and depth logarithmic in $T$ can perform continuous simulation of any TE automaton $\mathcal{A} = \left(\Sigma, \mathcal{Q}, \delta\right)$ at length $T$ with probability at least $1 - 2/T^2$. 
    \label{theorem:3}
\end{theorem}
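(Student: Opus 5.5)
The plan is to adapt the ``shortcut'' construction of Liu et al.~\cite{liu2023transformers}: simulate the TE automaton $\mathcal{A}=(\Sigma,\mathcal{Q},\delta)$ through its transformation semigroup, and use a parallel prefix (tree) composition that a Transformer can carry out with one constant-depth block per tree level.

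First, for each input symbol $\sigma\in\Sigma$ define the map $\delta_\sigma:\mathcal{Q}\to\mathcal{Q}$, $q\mapsto\delta(q,\sigma)$. Encode it as a $|\mathcal{Q}|\times|\mathcal{Q}|$ Boolean (one-hot column) matrix $M_\sigma$. Then $q_t=\delta_{\sigma_t}\circ\cdots\circ\delta_{\sigma_1}(q_0)$, and its one-hot vector is $M_{\sigma_t}\cdots M_{\sigma_1}e_{q_0}$. Lemma~\ref{lemma:1} guarantees that $T$ is finite, so the problem reduces to computing all $T$ prefix products of a sequence of elements of the finite monoid $\mathcal{Q}^{\mathcal{Q}}$.

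Second, compute the prefix products with the Hillis--Steele / Ladner--Fischer scan. At level $k=0,\dots,\lceil\log_2 T\rceil$, position $t$ holds the composite $F^{(k)}_t$ of the window $(\max(0,t-2^k),t]$, updated by
\[
F^{(k+1)}_t=F^{(k)}_t\circ F^{(k)}_{t-2^k}.
\]
Each level becomes one attention block plus one MLP block.
\begin{itemize}
\item \textbf{Attention layer.} Attention dimension $\mathcal{O}(|\mathcal{Q}|)$ suffices to retrieve the token at offset $2^k$. Use positional encodings (e.g.\ sinusoidal or $(\cos,\sin)$ of $t\theta$) so that the query at $t$ scores position $t-2^k$ strictly highest. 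With large enough temperature, softmax then approximates hard attention, and the value vector copies a function representation of size $\mathcal{O}(|\mathcal{Q}|)$: the list $(F(q))_{q\in\mathcal{Q}}$, each entry encoded compactly.
\item \textbf{MLP layer.} It implements the composition $(F,G)\mapsto F\circ G$, which is a lookup over pairs of states. A two-layer ReLU network with one hidden unit per pair $(q,q')$, i.e.\ width $\mathcal{O}(|\mathcal{Q}|^2)$, computes each output coordinate $[F\circ G](q)=\sum_{q'}\mathbf{1}[G(q)=q']F(q')$ exactly on Boolean inputs, which matches the claimed $D$.
\end{itemize}
A final layer applies $F^{(\log T)}_t$ to $q_0$, placed in the prefix as a dedicated token, and decodes through $f_c$.

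Third, account for the probability $1-2/T^2$. I would attribute it to the positional encoding and the softmax approximation. Following Sanford et al.~\cite{sanford2024transformers}, use random near-orthogonal positional embeddings in dimension $\mathcal{O}(\log T)\le\mathcal{O}(|\mathcal{Q}|)$, since $T$ enters only through depth and we may absorb it. By Johnson--Lindenstrauss plus a union bound over the $\le T^2$ position pairs, all spurious inner products are bounded by a margin with probability $\ge 1-1/T^2$. A second union bound covers the residual leakage from finite-temperature softmax across all $\log T$ layers and $T$ positions, contributing another $1/T^2$. After each level, the MLP re-rounds to Boolean values (a ReLU threshold), so errors do not accumulate.

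The main obstacle is this third step: making the attention lookup at offset $2^k$ robust with $\mathcal{O}(|\mathcal{Q}|)$ dimensions uniformly over all positions, and keeping the error from compounding across $\log T$ layers. If the random-embedding argument is awkward, my fallback is exact sinusoidal encodings with a deterministic margin. The probabilistic statement would then follow trivially, but the margin would degrade with $T$, so the norm of the weights would need to grow polynomially in $T$.
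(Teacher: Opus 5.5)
Your steps 1 and 2 match the first half of the paper's proof. The paper does not rebuild the scan; it imports it as Theorem~1 of Liu et al.: binary-tree evaluation of $\delta(\cdot,\sigma_T)\circ\cdots\circ\delta(\cdot,\sigma_1)$ in $\mathcal{O}(\log T)$ layers, with $\mathcal{O}(|\mathcal{Q}|)$ attention width and $\mathcal{O}(|\mathcal{Q}|^2)$ MLP width. Your transformation-monoid / Hillis--Steele reconstruction of that result is fine.

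You diverge on where $1-2/T^2$ comes from, and there your step 3 has a gap. In the paper the probability has nothing to do with attention routing. It is the two-sided $\chi^2$ Chernoff (Johnson--Lindenstrauss) bound for a Gaussian projection $v=\frac{1}{\sqrt{S}}Ru$, with $R\in\mathbb{R}^{S\times T}$ and $S=\mathcal{O}(\log T)$, of a state-sequence vector $u$. Each tail is at most $1/T^2$, and the bound is then applied to $u=X_i-X_j$ to argue that the shortcut map into $\mathcal{O}(\log T)$ dimensions preserves distances between state trajectories. You instead randomize the positional codes, and three problems follow.

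\begin{itemize}
\item Your JL/union bound controls inner products between keys, but you give no mechanism by which the query at $t$ comes to point at $t-2^k$. For sinusoidal codes the shift is a fixed rotation that $W^Q$ can apply; unstructured random codes lose that property. A width-$\mathcal{O}(|\mathcal{Q}|^2)$ MLP also cannot store a $T$-entry code table.
\item The step can be repaired by pointer doubling. Put the code of $t-1$ into the positional embedding and carry each token's current pointer in its value vector. Level $k$ then returns the code of $t-2^{k+1}$ together with $F^{(k)}_{t-2^k}$, with a BOS token holding the identity. This idea is missing from your proposal.
\item $\mathcal{O}(\log T)$ cannot be absorbed into $\mathcal{O}(|\mathcal{Q}|)$, because $T$ is unbounded relative to $|\mathcal{Q}|$. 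Even repaired, this route gives attention width $\mathcal{O}(|\mathcal{Q}|+\log T)$, not the stated $\mathcal{O}(|\mathcal{Q}|)$.
\item Softmax leakage is a deterministic function of the sampled codes and the temperature, so there is no second failure event to union-bound. Your factor $2$ is bookkeeping; in the paper the $2$ is the upper and lower tails of the $\chi^2$ bound.
\end{itemize}

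Your fallback is the argument to lead with. Take fixed codes $(\cos t\theta,\sin t\theta)$ with $\theta=\Theta(1/T)$ and queries rotated by $-2^k\theta$. The score margin is $1-\cos\theta=\Theta(1/T^2)$, so weights of size $\mathrm{poly}(T)$ drive softmax to hard attention, and your MLP re-rounding stops errors from compounding over the $\log T$ levels. This gives the continuous simulation deterministically at exactly the stated widths, so ``probability at least $1-2/T^2$'' holds trivially, with probability one. That deterministic result is all the paper's first part takes from Liu et al. Compared with the paper, this route gives widths with no $\log T$ term and no randomness. The specific constant $1-2/T^2$ in the statement comes only from the paper's separate JL compression of state vectors; your construction does not need that step, and your step 3 does not reproduce it.
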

The proof of Theorem~\ref{theorem:3} can be concluded from~\cite{liu2023transformers}, which is presented in Appendix~\ref{app:the3}. It shows that transformer-based LMs can learn shortcuts to TE models significantly more complex than deterministic finite automata. Formally, this means that given a sequence $q_0, q_1, q_2, \dots, q_T$ as states, the pre-trained LM simulates it by $q_0 \rightarrow q_1, q_2, q_4, \dots, q_T$.

\section{LM-Driven TE Framework: \textsc{Lmte}}
\label{lm_te_framework}

\begin{figure}
    \centering
    \includegraphics[width=1.\linewidth]{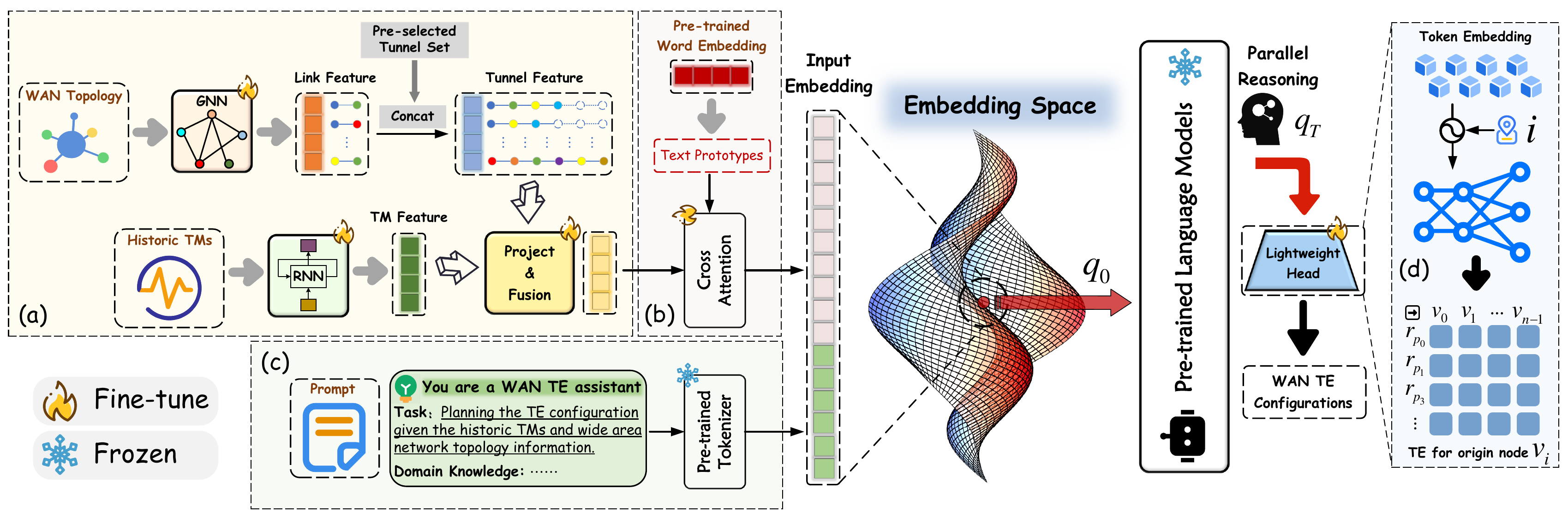}
    \caption{An overview of the proposed \textsc{LMTE}, the first LM–driven WAN TE framework. \textbf{Left:} input embedding alignment to bridge the modality gap. \textbf{Middle:} a frozen LM backbone, reused without any modification. \textbf{Right:} a lightweight LM head that jointly outputs TE configurations.}
    \label{fig:overview}
\end{figure}

The previous section identifies LMs as compelling candidates for the decision-making tasks of interest. However, certain aspects of the theoretical analysis are difficult to implement directly within LMs, e.g., constructions of the (latent) TE automata. The central challenge is how to bridge the modality gap between lengthy symbolic specifications and pretrained LM inputs and outputs. While adaptation is a simplest fix, it can be resource-intensive, potentially at the expense of generalization, as mentioned in \S\ref{sec:introduction}.
Below, we elaborate on the design of \textsc{Lmte}—a \textbf{L}anguage \textbf{M}odel-driven \textbf{T}raffic \textbf{E}ngineering framework—developed to address the dilemma, including cross-modal alignment (\S\ref{comp1}, \S\ref{comp2}), prompting support (\S\ref{comp3}), and efficient adaptation (\S\ref{comp4}). 

\subsection{Invariant Spatial-Temporal Embedding}\label{comp1} 

\textsc{Lmte} first encodes the multimodal inputs of TE problems into token-like embeddings, using a GNN encoder for network topology and an RNN encoder for historical TMs. As our formulation (\S\ref{te_problem_setup}) requires preconfigured tunnel information as an input, we explicitly form these paths using features generated by the GNN encoder~\cite{alqiam2024transferable}. Specifically, the node features are aggregated over each link and concatenated with the corresponding capacity to model edge features. For each tunnel, the features of all traversed edges are then concatenated to construct a tunnel-level representation.
For example, the embedding of tunnel 1$\rightarrow$2$\rightarrow$4$\rightarrow$3 is $\left[ E^{1,2}; E^{2,4}; E^{4,3}\right]$, where $E^{i,j}$ denotes the edge feature connecting nodes \#$i$ and \#$j$. In doing so, \textsc{Lmte} remains robust under various WAN changes (e.g., links, tunnels, or capacities). We refer to the property as \textit{state invariance}, where $\text{embedding}\left(\boldsymbol{T} \circ \boldsymbol{X} \right) = \boldsymbol{T} \circ \text{embedding}\left(\boldsymbol{X} \right)$ for an arbitrary transformation $\boldsymbol{T}$ and WAN state $\boldsymbol{X}$. Figure~\ref{fig:overview}\textcolor{purple}{(a)} summarizes the embedding pipeline.
To unify the structural and temporal representations, we then reproject the temporal embeddings $\boldsymbol{R}^{\mathcal{D}} \in \mathbb{R}^{S \times D}$ using a learned matrix $\mathbf{W}$ derived from the tunnel embeddings $\boldsymbol{R}^{\mathcal{T}} \in \mathbb{R}^{P \times L}$, where $S$ denotes the historical window size and $P$ the number of tunnels. The resulting embedding is computed as
\[
\boldsymbol{R}^{\mathcal{F}}_{s,c} = \sum_{d} \boldsymbol{R}^{\mathcal{D}}_{s,d} \cdot \mathbf{W}_{d,c}, \ \text{with } \mathbf{W} = \phi\left(f_{\theta}\left(\left[ \boldsymbol{R}^\mathcal{T}_1; \dots; \boldsymbol{R}^\mathcal{T}_P \right]\right)\right).
\]
Here, $f_\theta(\cdot)$ denotes a neural transformation (e.g., an MLP) applied to the concatenated tunnel features, and $\phi(\cdot)$ reshapes the output into $\mathbf{W}$. This mechanism enables the input model to align temporal dynamics with structural context by conditioning the projection on the underlying topology. An additional advantage of this design is that, since the output dimension $C$ is tunable via $f_\theta$, the fused representation $\boldsymbol{R}^{\mathcal{F}} \in \mathbb{R}^{S \times C}$ is much more compact than a naïve stack of the two modalities.

\begin{figure}
    \centering
    \includegraphics[width=0.9\linewidth]{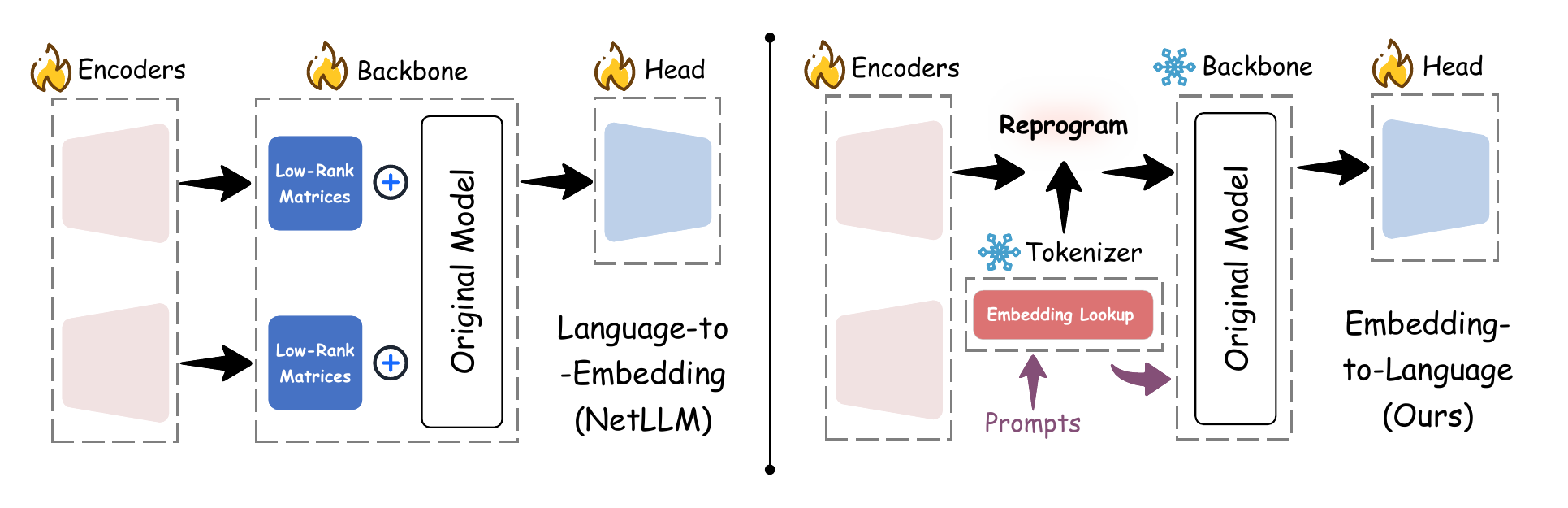}
    \caption{Illustration of our embedding-to-language alignment in comparison of language-to-embedding alignment (e.g., LoRA used by \textsc{NetLLM}~\cite{wu2024netllm}).}
    \label{fig:lmte_lora}
\end{figure}

\subsection{Embedding-to-Language Alignment}\label{comp2}  

\begin{wrapfigure}[14]{r}{0.65\textwidth}
  \vspace{-0.3cm}
  \begin{center}
  \includegraphics[width=0.9\linewidth]{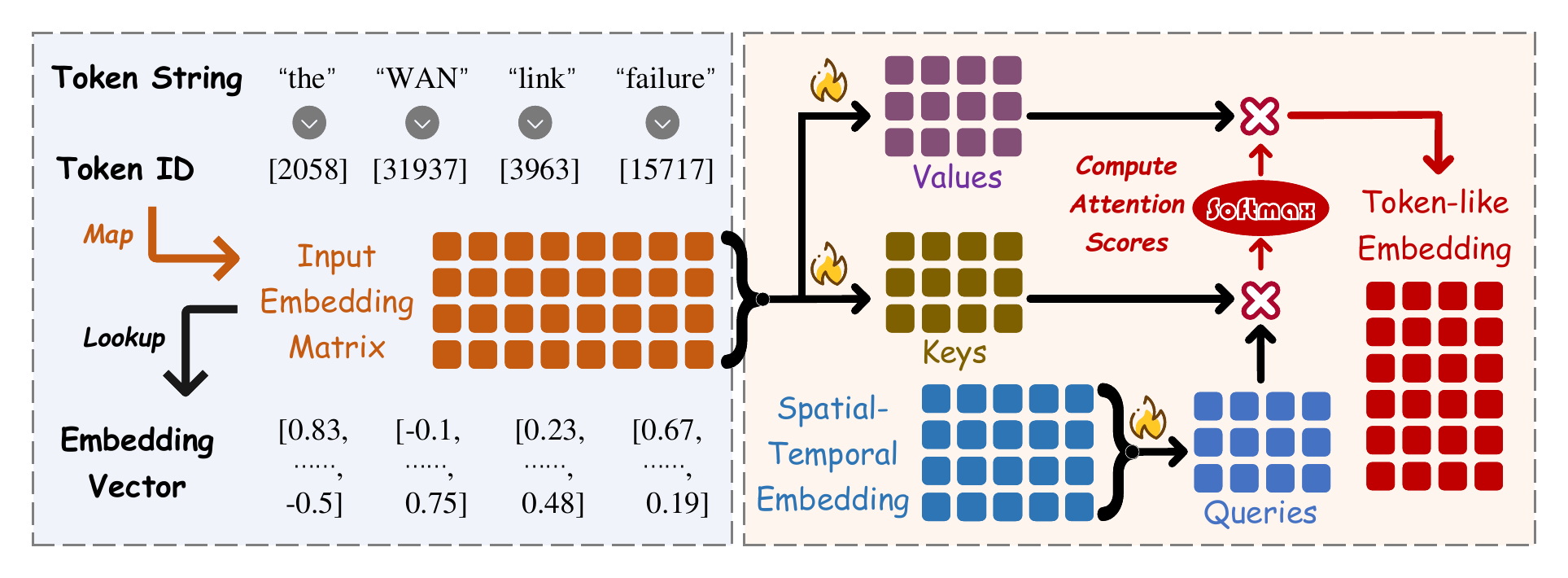}
\caption{Reprogramming spatial-temporal representations using original input embeddings of LMs. \textbf{(left)} The tokenizer maps and lookups token embedding vectors. \textbf{(right)} We obtain token-like embeddings via cross attention.}
\label{fig:reprogramme}
\end{center}
\end{wrapfigure}

Now, the framework must align tokenized representations understood by LMs with the input ones. A common language-to-embedding approach is LoRA~\cite{wu2024netllm}, which injects trainable low-rank matrices into selected layers of the backbone. Compared to full-parameter fine-tuning, this strategy is technically feasible for bridging heterogeneous data modalities in task-specific settings. However, as it still modifies the backbone parameters, it may gradually forget previously acquired knowledge, diminishing generalization—an issue known as catastrophic forgetting~\cite{wang2023orthogonal}. Instead, \textsc{Lmte} adopts an embedding-to-language approach, leaving the backbone unchanged while offering even greater cost-efficiency (see Figure~\ref{fig:lm_te_param}). The core idea is to reprogram the spatial-temporal embeddings into the source NL space~\cite{jin2024time}, as illustrated in Figure~\ref{fig:lmte_lora}. Before delving into the details, we first clarify how input tokens are embedded in an LM: each token is mapped to a unique identifier, which is then used to retrieve a vector from a learned input embedding matrix. The pipeline is summarized on the left of Figure~\ref{fig:reprogramme}. Since this matrix inherently serves as a set of \textit{text prototypes} that ground the LM’s understanding, our alignment mechanism then reduces to a linear lookup over these pre-trained word embeddings, guided by the WAN information. 

As shown on the right of Figure~\ref{fig:reprogramme}, \textsc{Lmte} employs a cross attention to realize the embedding-to-language alignment. Concretely, $\boldsymbol{R}^{\mathcal{F}}$ is linearly projected to the query $\mathbf{W}_k^Q \boldsymbol{R}^{\mathcal{F}} \in \mathbb{R}^{n_q \times d_k}$, for a attention head $k$. The same operation applies to the key $\mathbf{W}_k^K \boldsymbol{R}^{\mathcal{N}} \in \mathbb{R}^{n_k \times d_k}$ and value $\mathbf{W}_k^V \boldsymbol{R}^{\mathcal{N}} \in \mathbb{R}^{n_k \times d_v}$, where $\boldsymbol{R}^{\mathcal{N}}$ denotes the result of a linear projection to the input embedding matrix. We then calculate the cross-attention as 
\[
  \boldsymbol{R}^{\mathcal{A}}_k = {\text{\scshape{softmax}}}\left(\frac{\left(\mathbf{W}_k^Q \boldsymbol{R}^{\mathcal{F}}\right)\left(\mathbf{W}_k^K\boldsymbol{R}^{\mathcal{N}}\right)^\top}{\sqrt{d_k}}\right)\mathbf{W}^V_k\boldsymbol{R}^{\mathcal{N}}.
\]
When multi-head operation is applied, the outputs of each head are concatenated. The underlying rationale of the above procedure is to enable each spatial-temporal representation to query a shared bank of linguistic prototypes, so that \textsc{Lmte} can naturally seek interpretation of the task in the NL space.

\subsection{Task-Specific Prompting}\label{comp3} 

Another notable strength of \textsc{Lmte} over existing LM-based networking methods (i.e., \textsc{NetLLM}~\cite{wu2024netllm}) lies in its ability to maintain LMs' prompt input compatibility (see Figure~\ref{fig:lmte_lora}). This is particularly beneficial in our task-specific settings, where well-crafted prompts function as a vital interface between expert intent and model behavior. An example is shown in Figure~\ref{fig:overview}\textcolor{purple}{(c)}. On one hand, these prompts contribute to knowledge activation and task grounding. For instance, they can specify the LM’s role (e.g., ``$\mathsf{You \ are \ a \ WAN \ traffic \ engineer}$”), and include TE-specific context such as descriptions of the WAN or statistics of historical TMs. On the other hand, they provide a flexible mechanism for encoding ad hoc constraints, such as network changes or potential traffic pattern. Consider the case of link failures, by incorporating failure-related descriptions into the prompt (e.g., ``$\mathsf{Link \ 1–3 \ is \ currently \ down}$”), \textsc{Lmte} is able to adapt its output accordingly without retraining. 
The task-specific prompting thus not only improves interpretability but also enhances the robustness of \textsc{Lmte}, enabling flexible responses to unseen and dynamic conditions in volatile WAN environments. The overall prompt structure is shown in Figure~\ref{fig:prompt}.

\begin{figure}
    \centerline{\includegraphics[width=0.84\linewidth]{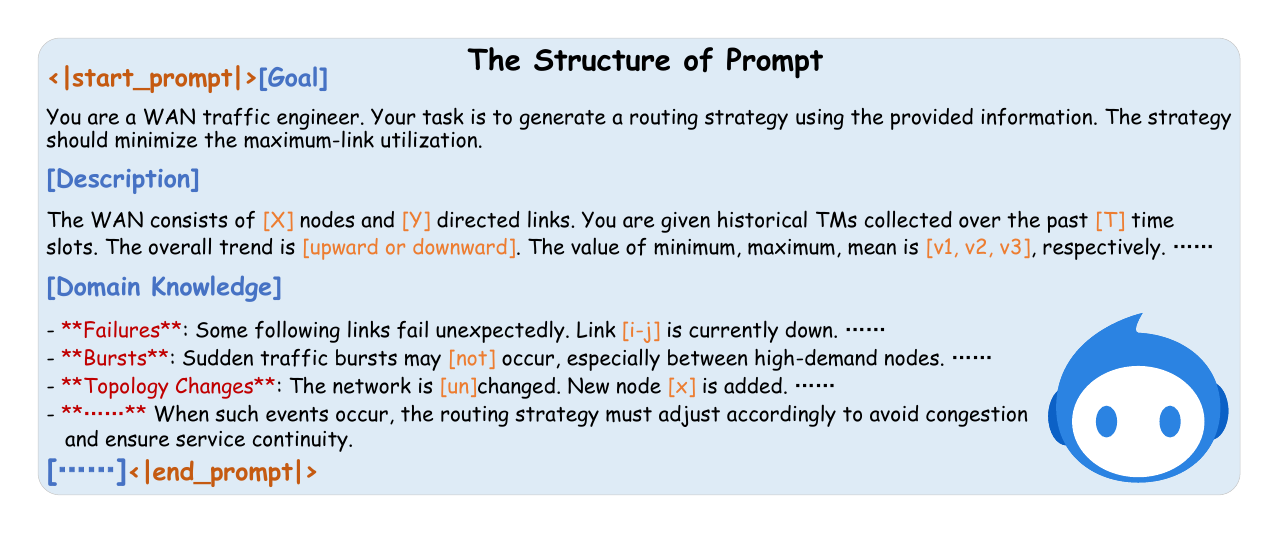}}
    \caption{The overall structure of the task-specific prompt used by \textsc{Lmte}.}
    \label{fig:prompt}
\end{figure}

\textbf{Why not include all tunnels directly in the prompt?} A WAN with over 120 nodes and 8 tunnels per OD pair yields $>$110K entries—exceeding LM's maximum input limits. Serializing this information directly into the prompt is therefore impractical.

\subsection{Lightweight End-to-End Adaptation}\label{comp4}  

In this part, we present how to obtain globally optimal allocations and fine-tune the model of \textsc{Lmte}. The first challenge here is \textsc{Lmte}'s scalability to keep pace with the growing size of the WAN. For a topology with thousands of nodes, the solution space can span millions of dimensions, resulting in an overwhelming number of trainable parameters. To break the ``curse of dimensionality'', inspired by multi-agent policy~\cite{xu2023teal}, \textsc{Lmte} generates split ratios independently via a shared, lightweight prediction head. As illustrated in Figure~\ref{fig:overview}\textcolor{purple}{(d)}, parameters are shared across origin nodes (or routers), which is specified to the projector using the Transformer sinusoidal positional embedding~\cite{vaswani2017attention}. This design reduces the neural network's computational complexity from $\mathcal{O}(\left|\mathcal{V}\right|^2)$ to $\mathcal{O}(\left|\mathcal{V}\right|)$, substantially lowering the memory overhead. Regarding model adaptation, we adopt an end-to-end direct optimization strategy. Similar to DOTE~\cite{perry2023dote}, the LM-driven system tries to infer fine-grained configurations for $\mathcal{D}^i_t$ from the recent sequence $\left\{ \mathcal{D}^i_{t-1}, \mathcal{D}^i_{t-2}, \dots, \mathcal{D}^i_{t-H-1} \right\}$, topology $\mathcal{G}(\mathcal{V},\mathcal{E},c)$, and its corresponding tunnels.
After computing the final softmax outputs for all OD pairs, the system evaluates the network-wide MLU loss in Eqn.~\ref{eqn:mlu_formulte}, whose gradients are computable for any past
realization~\cite{perry2023dote}. \textsc{Lmte} then updates its trainable parameters using the gradients of the expected MLU computed over each batch of data points. 
Once adapted, the alignment and reasoning of LMs can be done automatically. It is worth mentioning that only the lightweight encoding and head networks need to be deployed locally, while the \textit{frozen} backbone can be hosted remotely (as in Figure~\ref{fig:lm_in_cloud}) for resource-constrained environments.

\section{Evaluation}
\label{exp_section}

In this section, we begin by describing the experimental setup in \S\ref{exp_setup}, followed by a comparison of \textsc{Lmte} with SOTA TE methods in \S\ref{formal_ev}. We then evaluate its performance under various conditions (i.e., link failures, traffic bursts, and natural drift) in \S\ref{special_ev}, and its overhead in \S\ref{comlexity_ev}. Finally, additional experiments \& visualizations are presented in \S\ref{additional_ev}.

\subsection{Experimental Setup}\label{exp_setup}

\paragraph{Datasets}
We consider three real-world WAN topologies: US Internet2 Network~\cite{zhang2005network} (\textit{Abilene}, \textit{12} nodes, \textit{30} edges), China Education and Research Network~\cite{qiao2024autotomo} (\textit{CERNET}, \textit{14} nodes, \textit{32} edges) and Pan-European Research Network~\cite{uhlig2006providing} (\textit{G\'{E}ANT}, \textit{23} nodes, \textit{74} edges), each with traffic matrices captured at different snapshots. We also select two larger WAN topologies (\textit{Cogentco}, \textit{197} nodes, \textit{486} edges and \textit{UsCarrier}, \textit{158} nodes, \textit{378} edges) from the Internet Topology Zoo~\cite{knight2011internet} and generate \textit{non-stationary} synthetic TM sequences via a \textit{gravity model}~\cite{roughan2002experience}. Specifically, we model them by combining trend, sinusoidal (periodic) variations, and noise.
Note that we adopt a \textit{7:1:2} split ratio for training, validation, and testing. We choose \textit{12} as the history window size for the input TMs.

\paragraph{Baselines}
We compare \textsc{Lmte} against the following baselines: \circled{1} two state-of-the-art ML-driven approaches: \textit{DOTE}\cite{perry2023dote} and \textit{FIGRET}\cite{liu2024figret}, which make centralized decisions based on historical demand observations; and \circled{2} three Optimization-based approaches: \textit{COPE}\cite{wang2006cope}, \textit{Oblivious Routing}\cite{azar2003optimal}, and \textit{Gurobi-Pred}, where Gurobi (v11.0.3)~\cite{gurobi} is employed to optimize predictions obtained by weighted moving average. 
Note that all the above baselines use the same set of candidate paths. Regarding the tunnel selection, we employed $4$ shortest paths for experiments on the two larger topologies. For others, we employed $8$ shortest paths as the default setting, consistent with previous research~\cite{perry2023dote}.

\paragraph{Implementing \textsc{Lmte}}
We implement \textsc{Lmte} using PyTorch and PyTorch Geometric, and conduct experiments on a $16$-core virtual machine equipped with NVIDIA RTX 4090 24GB GPUs. Unless otherwise specified, we utilize LLaMA-3-8B with the first $8$ layers as the default backbone. For the RNN encoder, we adopt a bidirectional RNN architecture; for the GNN encoder, we adopt a Graph Transformer architecture. To improve the model’s semantic understanding during adaptation, we randomly introduce link failures
and traffic bursts with a probability of $10\%$. Additionally, the shared output projection is parallelized to accelerate the process by prioritizing tensor calculations over for loops. All experiments are repeated $3 \sim 5$ times, and we report the averaged results.

\subsection{TE Solution Quality}\label{formal_ev}

\begin{figure}
    \centerline{\includegraphics[width=1.\linewidth]{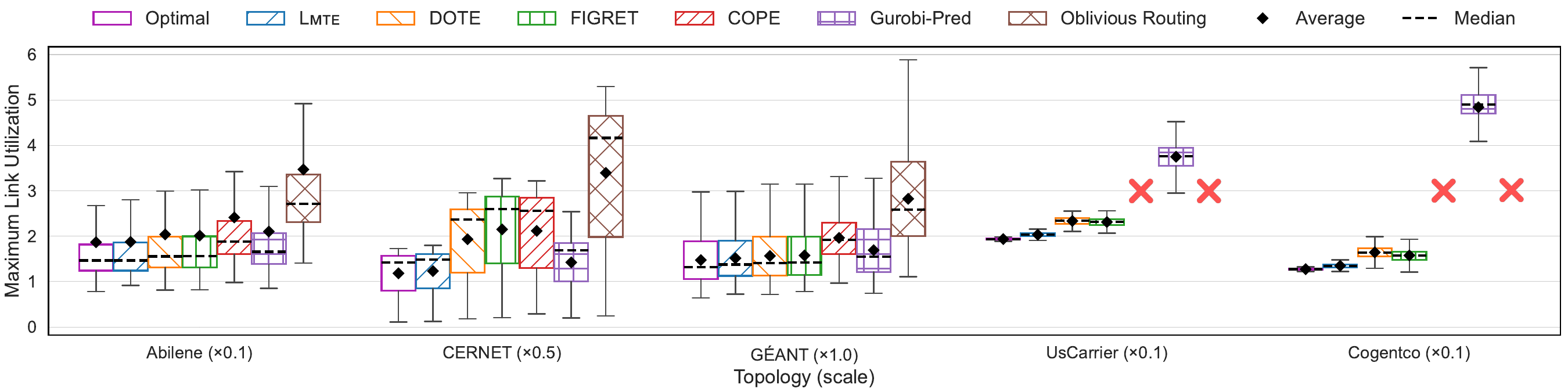}}
    \caption{TE quality of \textsc{Lmte} and baselines under the objective of minimizing maximum link utilization (MLU). The $x$-axis shows different topologies with their normalization scales, where MLU is adjusted by the scale factor. \textcolor{red}{Red crosses} indicate cases where no result is available due to computation failure.}
    \label{fig:te_performance}
\end{figure}

Figure~\ref{fig:te_performance} compares the normal-case quality of TE configurations between \textsc{Lmte} and other TE schemes on five topologies. The $y$-axis value represents the maximum link utilization. We observe that \textbf{\circled{1} ML-driven approaches:} The two DNN-based schemes, DOTE and FIGRET, are reaching near-optimal average performance among existing methods. However, their results exhibit a trade-off as the WAN scale increases. Furthermore, on the smaller CERNET topology, their behavior appears notably anomalous. Given the more dynamic network patterns in CERNET, it is plausible that these DNNs have merely overfitted to the historical data, thus indicating a lack of generalization ability. \textbf{\circled{2} Optimization-based approaches:} (Semi-)oblivious schemes inevitably incur high link utilization even under non-worst-case scenarios. More critically, algorithms like COPE and Oblivious Routing become impractical for large-scale topologies (e.g., those exceeding 120 nodes) due to their prohibitive memory requirements. While Gurobi-Pred performs well in environments with temporally stable traffic, its effectiveness degrades substantially in the presence of non-stationary demands. \textbf{\circled{3} LM-driven approach (\textsc{Lmte}):} Our proposed \textsc{Lmte}, grounded in language models, consistently achieves the lowest MLU across all topologies, outperforming both DNN- and optimization-based methods in average and median performance. For instance, compared to the best TE scheme on large and complex WANs, \textsc{Lmte} reduces the average MLU by $8\% \sim 15\%$. Benefiting from the pre-trained reasoning and abstraction capabilities of LMs, \textsc{Lmte} adapts effectively to diverse traffic patterns and topology scales without overfitting to specific network instances. These results underscore the potential of LMs as a powerful foundation for real-world, large-scale WAN TE.

\subsection{Performance under Unseen Scenarios}\label{special_ev}

\begin{figure}
    \subfigure[Reacting to random failures\label{fig:multiple_failures}]{
        \centering
        \includegraphics[width=0.39\linewidth]{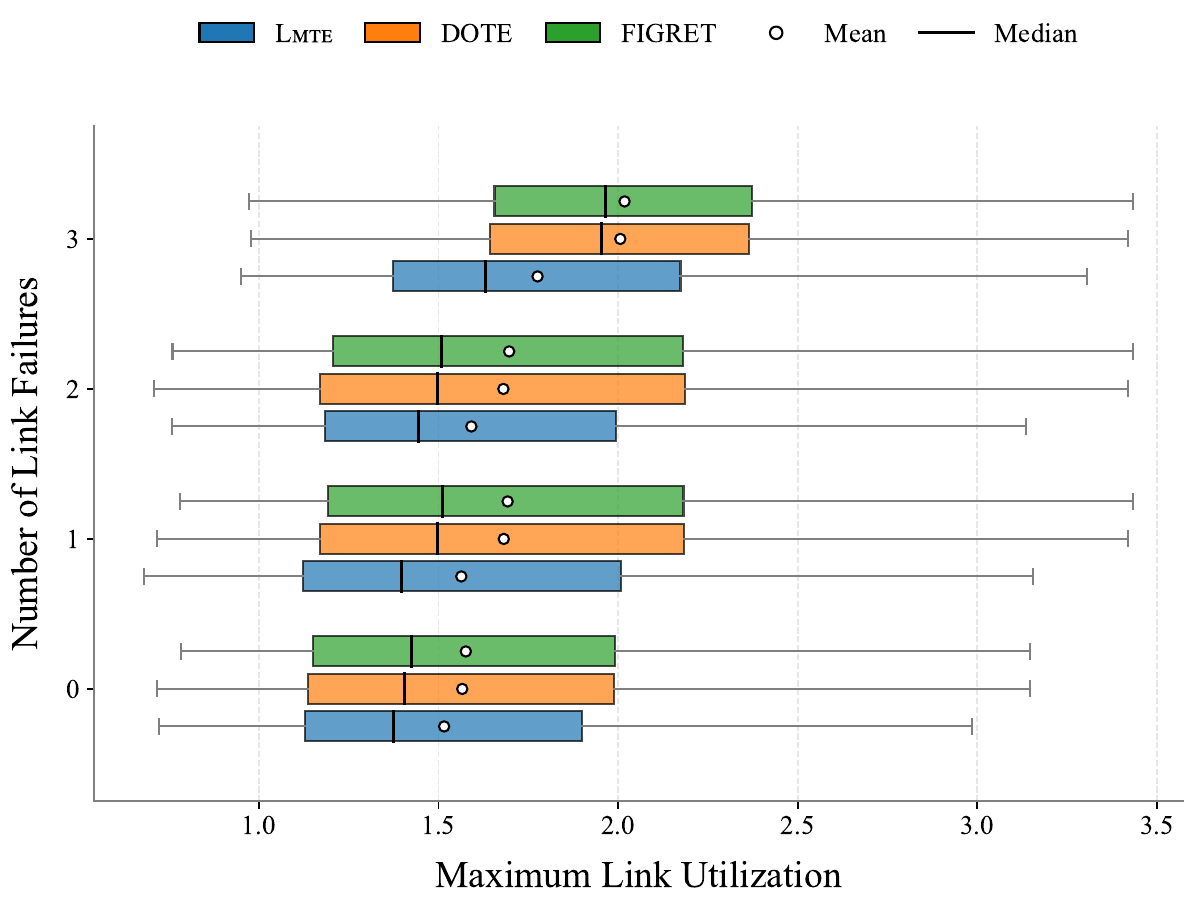}
    }
    \hfill
    \subfigure[MLU under different single failure scenarios\label{fig:single_failure}]{
        \centering
        \includegraphics[width=0.59\linewidth]{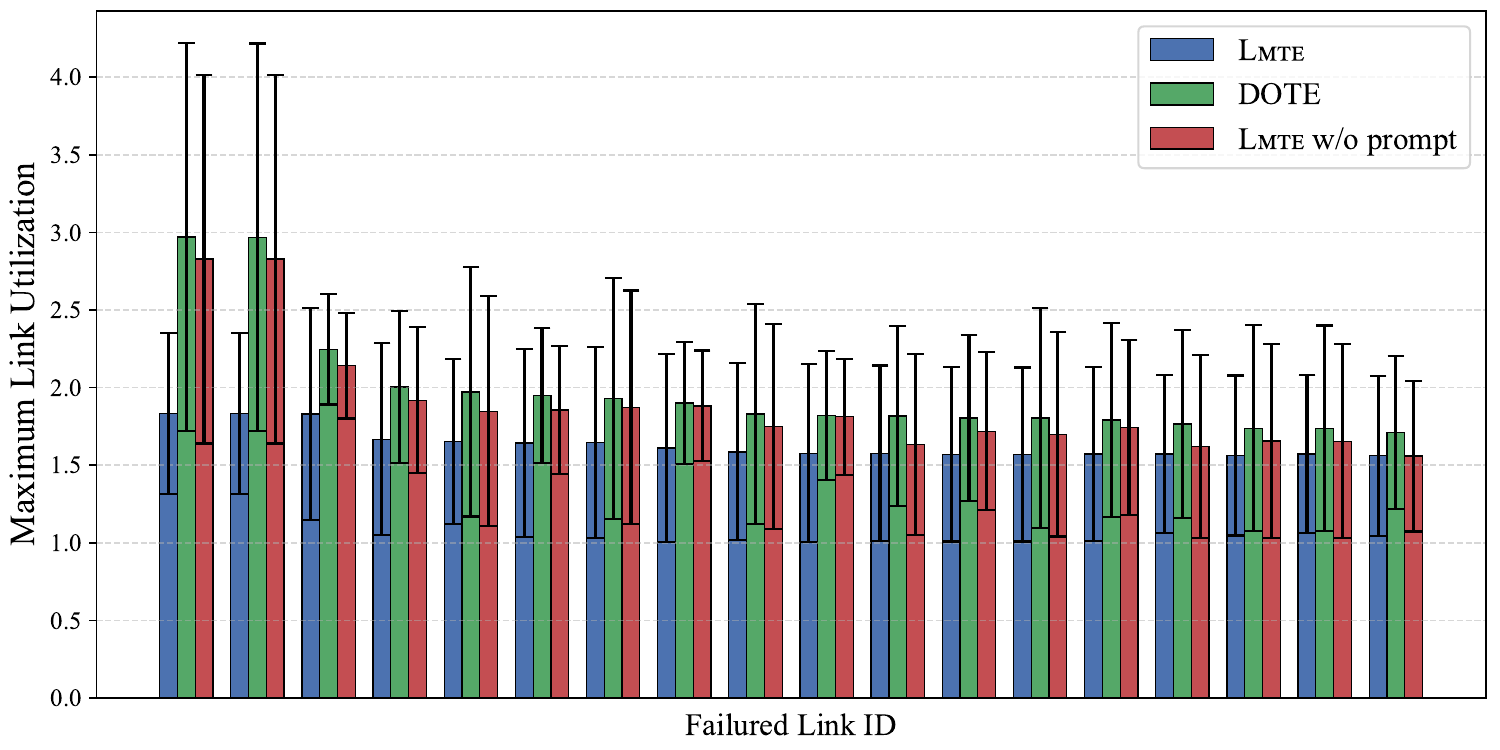}
    }
    \caption{Comparing the performance of \textsc{Lmte} and baselines on G\'{E}ANT under testing scenarios with link failures.}
\end{figure}

\paragraph{Coping with Network Failures}
Beyond prompt guidance, \textsc{Lmte} handles different link failures using simple heuristics that proportionally redistribute affected traffic across available paths, introducing minimal computational overhead. The same strategy is applied to the baselines.
Figure~\ref{fig:multiple_failures} compares \textsc{Lmte}, DOTE, and FIGRET under varying numbers of link failures on the G\'{E}ANT topology. While all methods maintain relatively stable performance, \textsc{Lmte} shows superior robustness, with the performance gap widening as failures increase.
To further investigate, we simulate single-link failures on the sorted $18$ most critical links. As shown in Figure~\ref{fig:single_failure}, \textsc{Lmte}'s average MLU remains between $1.6$ and $1.75$, closely matching its optimal performance in Figure~\ref{fig:te_performance}. The figure also highlights that prompt-guided \textsc{Lmte} performs especially well under the most severe failure scenarios.

\paragraph{Handling Natural Distribution Drift}
We test the impact of natural train-test distribution shifts on \textsc{Lmte} by fine-tuning with progressively earlier TM segments (0$\sim$25\%, 25$\sim$50\%, and 50$\sim$75\%), validating on 75$\sim$85\%, and testing on the final 15\%. This setup highlights the performance drop compared to training on the first 75\% of the TM. A detailed comparison with FIGRET is presented in Tab.~\ref{tab:generalization-drift}. \textsc{Lmte} shows strong resilience, with less than 10\% degradation even when only one-third of the traffic data is used for adaptation—except on the G\'{E}ANT dataset, where the data distribution is highly imbalanced. The increased volatility observed in FIGRET underscores the advantage of pretrained language models in few-shot adaptation and generalization.

\begin{table}[h]
\centering
\caption{Performance decline with natural distribution drift.}
\label{tab:generalization-drift}
\begin{tabular}{c|c|ccc}
\toprule
\multirow{2}{*}{\textbf{Topology}} & \multirow{2}{*}{\textbf{Scheme}} & \multicolumn{3}{c}{\textbf{Training traffic matrices time segments}} \\
\cmidrule(lr){3-5}
 & & \textbf{0\%$\sim$25\%} & \textbf{25\%$\sim$50\%} & \textbf{50\%$\sim$75\%} \\
\midrule
\multirow{2}{*}{Abilene} 
  & \textsc{Lmte} & $6.9 \% \pm 0.50 \%$  & $4.7 \% \pm 0.46 \%$ & $3.8 \% \pm 0.81 \%$ \\
  & FIGRET        & $10.5 \% \pm 0.53 \%$  & $7.6 \% \pm 0.49 \%$ & $8.5 \% \pm 0.76 \%$ \\
\cmidrule(lr){1-5}
\multirow{2}{*}{CERNET} 
  & \textsc{Lmte} & $7.8 \% \pm 0.84 \%$  & $6.5 \% \pm 0.56 \%$ & $3.2 \% \pm 0.51 \%$ \\
  & FIGRET        & $10.6 \% \pm 1.08 \%$  & $9.1 \% \pm 0.57 \%$ & $16.7 \% \pm 1.55 \%$ \\
\cmidrule(lr){1-5}
\multirow{2}{*}{G\'{E}ANT} 
  & \textsc{Lmte} & $15.3 \% \pm 1.36 \%$  & $16.7 \% \pm 1.60 \%$ & $2.5 \% \pm 0.18 \%$ \\
  & FIGRET        & $20.8 \% \pm 0.61 \%$  & $30.6 \% \pm 0.73 \%$ & $5.8 \% \pm 0.34 \%$ \\
\bottomrule
\end{tabular}
\end{table}

\begin{wrapfigure}[13]{r}{0.45\textwidth}
  \vspace{-0.25cm}
  \begin{center}
  \includegraphics[width=1.\linewidth]{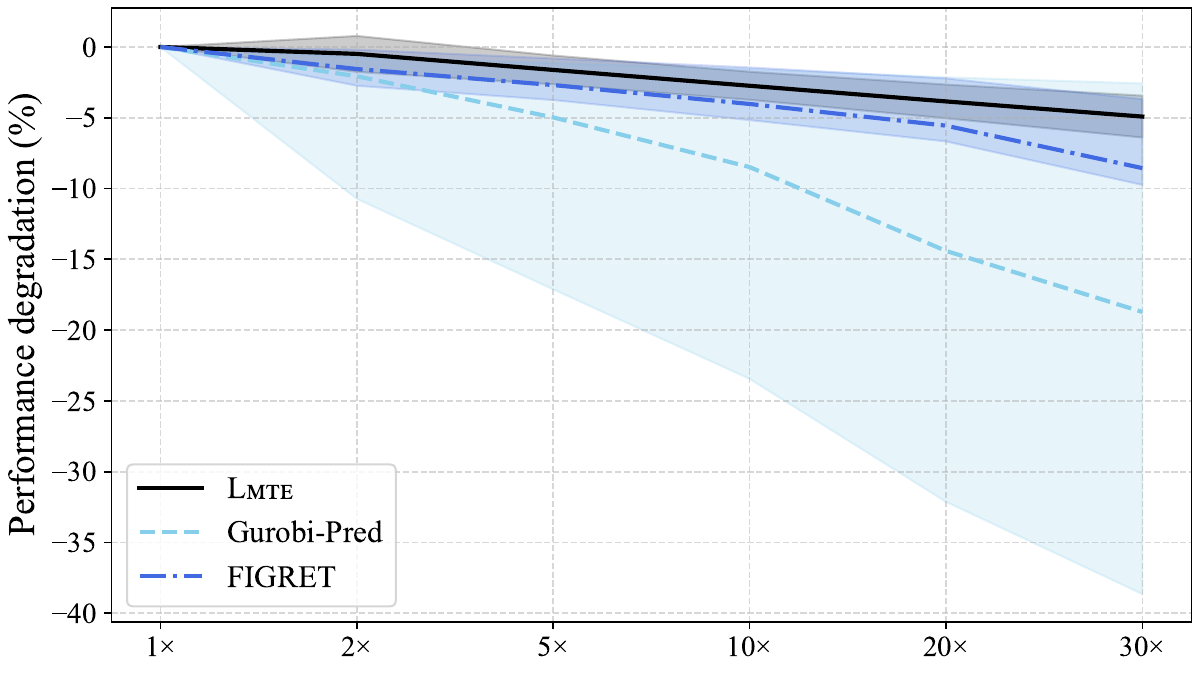}
\caption{Performance degradation with fluctuations.}
\label{fig:burst_degration}
\end{center}
\end{wrapfigure}

\paragraph{Robustness to Demand Changes}
To evaluate \textsc{Lmte}’s robustness under drastically changing demand, we introduce temporal fluctuations to the test TMs, by following a similar approach to~\cite{xu2023teal}. For each OD-pair, we compute the variance of its demand over time and scale it by factors of $2$, $5$, $10$, $20$, and $30$ to define zero-mean normal distributions. Samples drawn from these distributions are then added to each demand at every time slot. Results are shown in Figure~\ref{fig:burst_degration}. Compared to FIGRET, which is specifically designed to handle such scenarios, \textsc{Lmte} achieves comparable or even greater stability. The performance does not decrease by more than $5 \%$. In contrast, prediction-based TE schemes suffer a significant degradation in MLU relative to the other two ML-based methods.

\subsection{Computation Complexity}\label{comlexity_ev}

\begin{wrapfigure}[15]{r}{0.75\textwidth}
\vspace{-0.25cm}
\centering
\subfigure[Computation Time vs. WAN Scale\label{fig:computation_cost}]{
        \centering
        \includegraphics[width=0.495\linewidth]{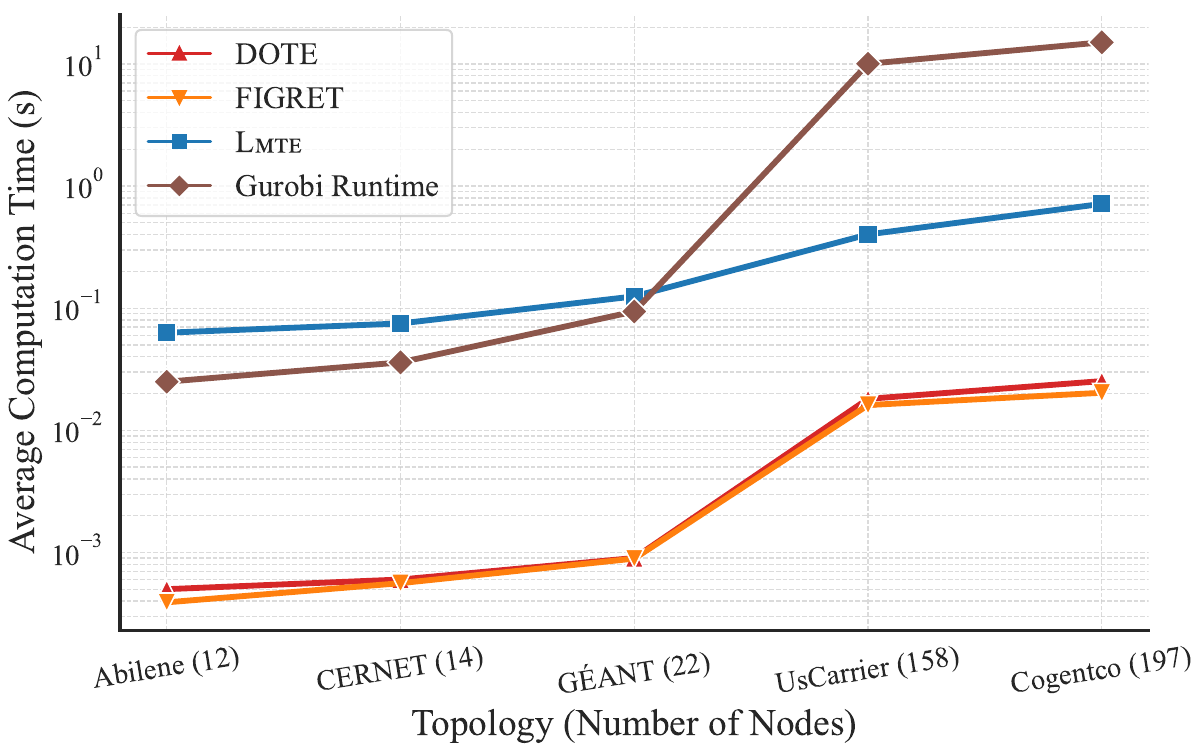}
    }
    \hfill
    \subfigure[Model Size vs. WAN Scale\label{fig:share_params_cost}]{
        \centering
        \includegraphics[width=0.47\linewidth]{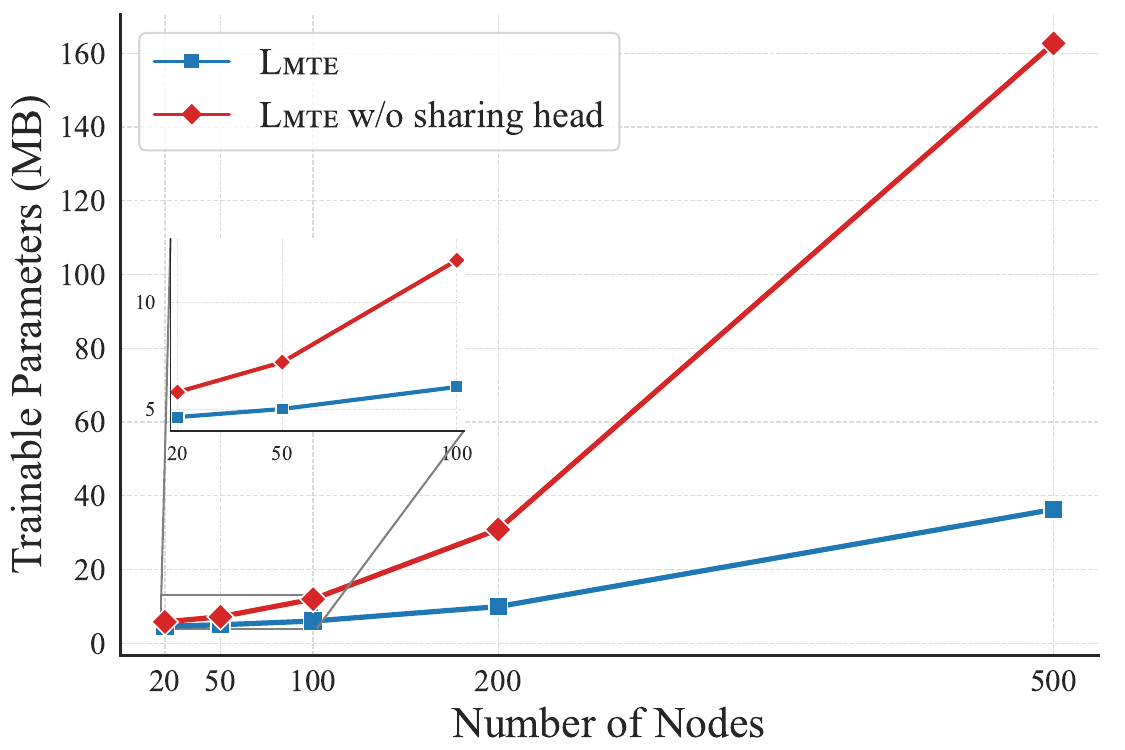}
    }
\caption{Comparison of computation complexity across different WAN scales.}
\end{wrapfigure}

Figure~\ref{fig:computation_cost} compares the runtime of \textsc{Lmte} against other ML-based methods and Gurobi. DOTE and FIGRET, both relying on lightweight MLP architectures, achieve faster inference than \textsc{Lmte}. However, \textsc{Lmte} still delivers over an order-of-magnitude speedup compared to Gurobi on large topologies, suggesting its runtime remains practical for real-world deployment. For instance, loading a 7B LM like LLaMA-2-7B takes about 0.1s $\sim$ 0.3s to generate one answer, and it takes about 0.04s for smaller LMs such as OPT-1.3B to generate one answer~\cite{wu2024netllm}. This provides ample time for measurement and follow-up actions within each TE cycle. 
We also report the total number of tunable parameters\footnote{Half-precision is used for parameter computation in our experiments.} 
in Figure~\ref{fig:share_params_cost}. As the number of nodes increases, the memory required by \textsc{Lmte} decreases substantially, owing to our shared LM head design.

\subsection{Additional Results \& Discussions}\label{additional_ev}

\paragraph{Impact of LMs' Parameter Scales}
To assess the applicability of \textsc{Lmte} across different model scales, we evaluate it on three additional LMs beyond 3-8B: 3.2-1B, 3.2-3B, and 2-13B. A control variant without any backbone is also included to examine the LM's standalone reasoning capability. As shown in Figure~\ref{fig:models}, larger models generally yield better performance, with diminishing returns beyond 8B. Moreover, all adapted LMs outperform the control variant on both topologies, demonstrating the effectiveness of leveraging LMs. Meanwhile, as shown in the figure, a 3B-parameter model achieves over 90\% of the performance of its 13B-parameter counterpart. Therefore, \textsc{Lmte} achieves a favorable performance–cost tradeoff, indicating that strong traffic engineering performance can be attained without resorting to very large language models, which is appealing for practical deployment.

\begin{figure}
\centering
    \subfigure[Performance v.s. backbones\label{fig:models}]{
        \centering
        \includegraphics[width=0.47\linewidth]{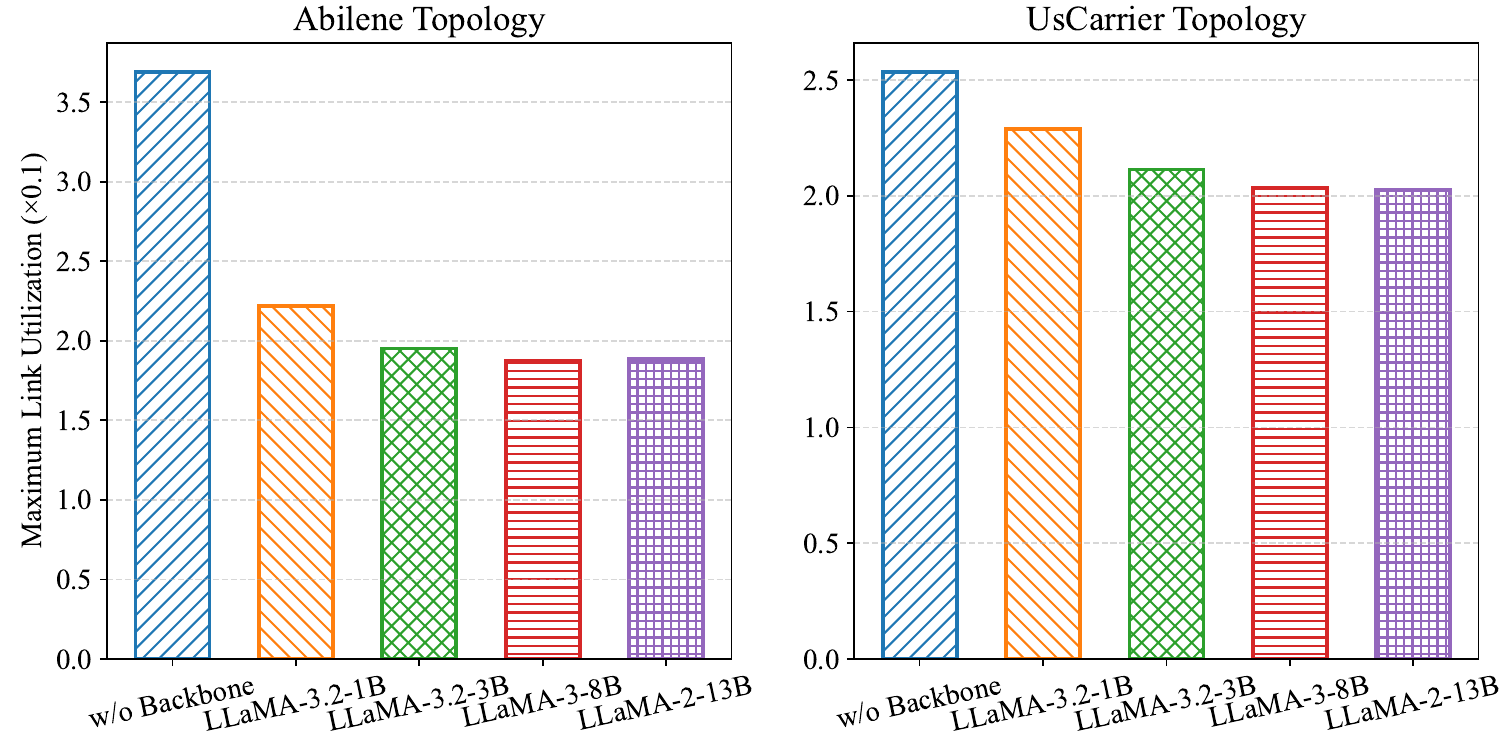}
    }
    \subfigure[Performance v.s. layers\label{fig:layers}]{
        \centering
        \includegraphics[width=0.43\linewidth]{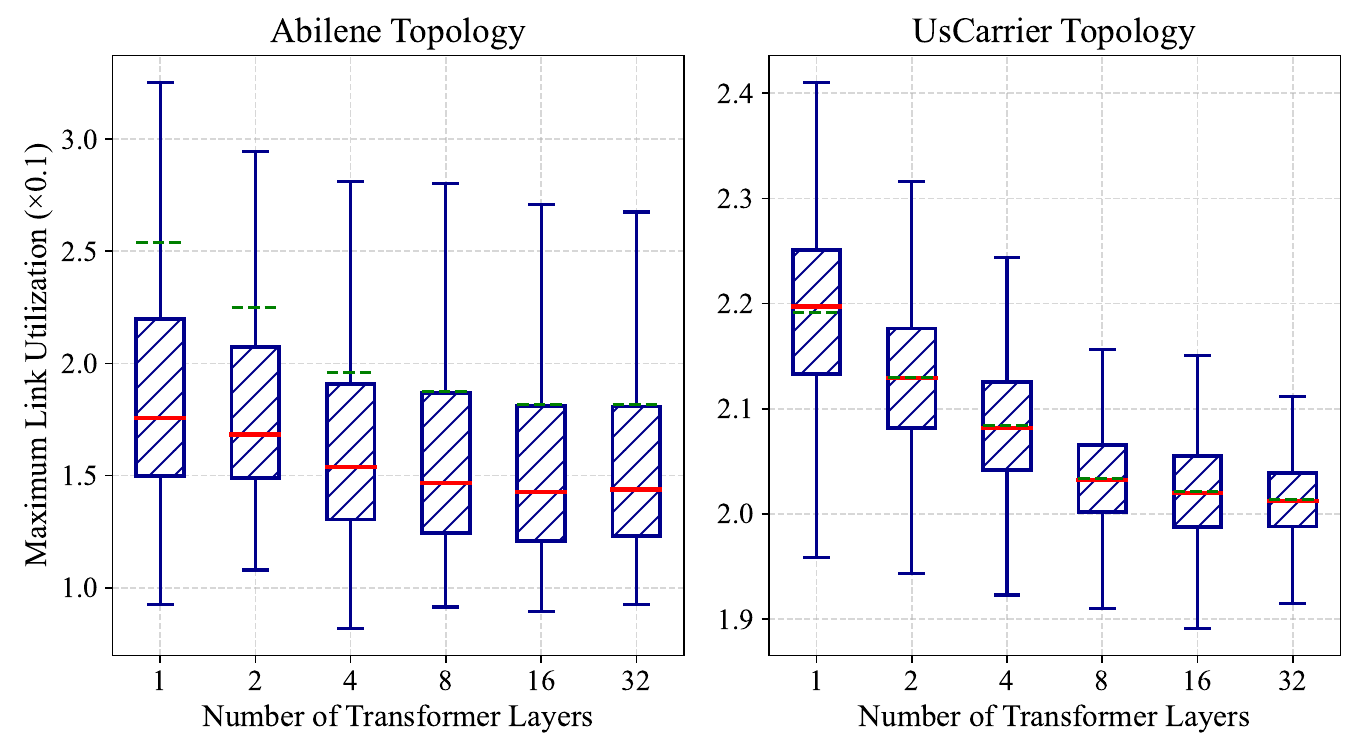}
    }
    \caption{Exploring the importance of backbone and layers of pre-trained LMs.}
\end{figure}

\paragraph{Impact of LMs' Layers}
To measure the impacts of LMs' Transformer layers on the WAN TE performance, we conduct an experiment across varying layer configurations. In Figure~\ref{fig:layers}, we note the findings: (i) more layers generally lead to better performance, but the gains plateau beyond a certain depth, and (ii) as the topology size increases, the optimal number of layers tends to be larger, though typically no more than ten layers are required. Overall, this is consistent with our theoretical findings (see Theorem~\ref{theorem:3}). 

\begin{wrapfigure}[14]{r}{0.72\textwidth}
  \vspace{-0.5cm}
  \begin{center}
  \includegraphics[width=1.\linewidth]{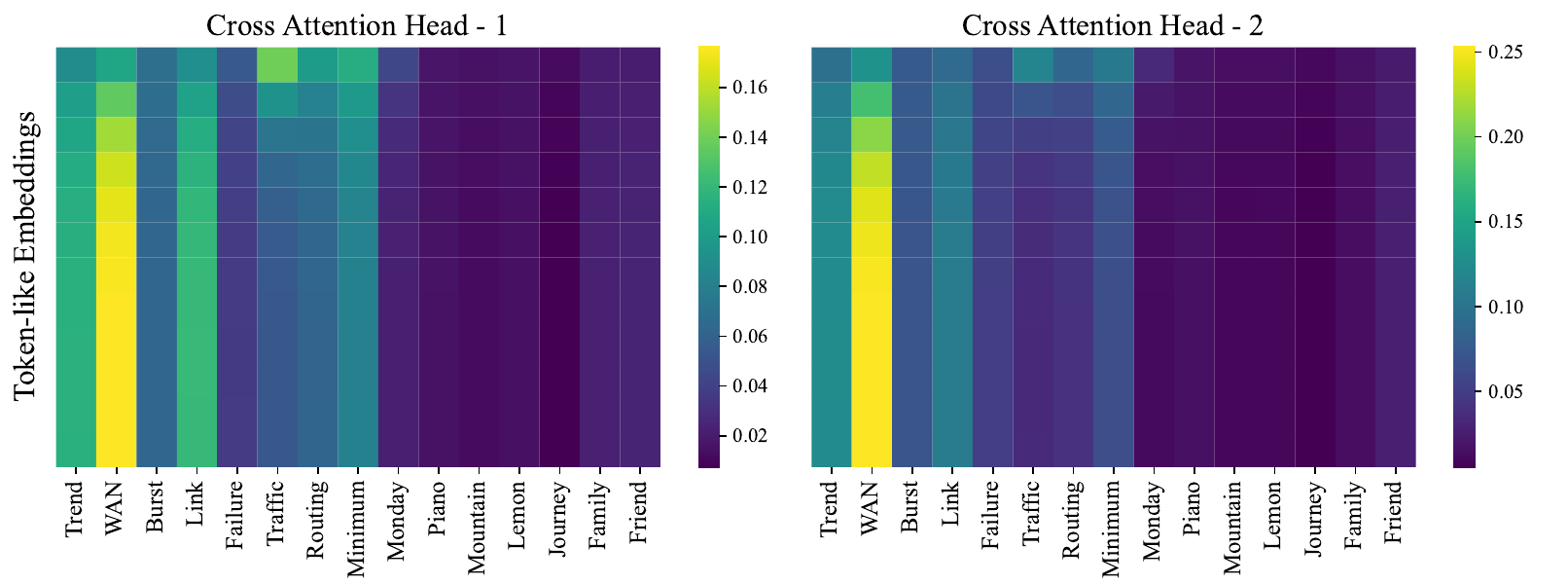}
\caption{Learned cross-attention maps from the input alignment module.}
\label{fig:representation}
\end{center}
\end{wrapfigure}

\paragraph{Representation Interpretation}\label{subsec:visulization}
We next visualize the cross-attention maps from two randomly selected attention heads in the embedding-to-language alignment module of \textsc{Lmte} in Figure~\ref{fig:representation}, based on the G\'{E}ANT dataset. Each row corresponds to a random spatio-temporal input instance, and each column denotes a chosen token embedding, covering both TE-relevant terms (e.g., ``$\mathsf{WAN}$'', ``$\mathsf{Link}$'') and TE-irrelevant ones (e.g., ``$\mathsf{Piano}$'', ``$\mathsf{Friend}$''). The heatmaps reveal strong alignment between input features and semantically relevant tokens, suggesting that the module bridges the gap between heterogeneous modalities. Our proposal is thus interpretable.

\begin{figure}
\centering
    \subfigure[Tokenization (input) module\label{fig:input}]{
        \centering
        \includegraphics[width=0.45\linewidth]{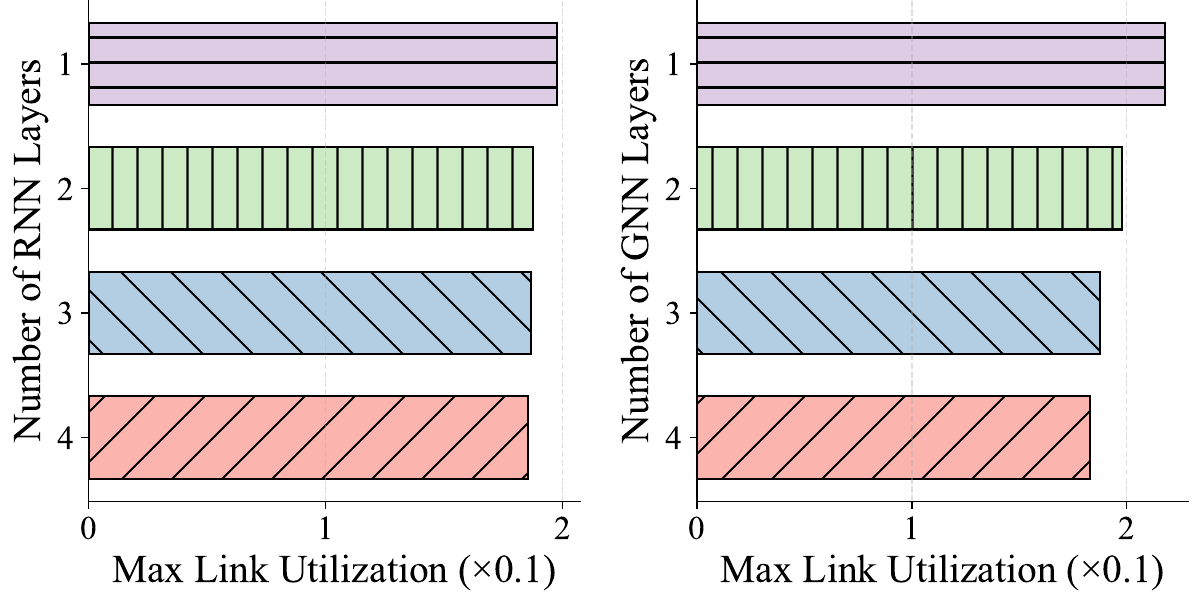}
    }
    \subfigure[Detokenization (output) module\label{fig:output}]{
        \centering
        \includegraphics[width=0.45\linewidth]{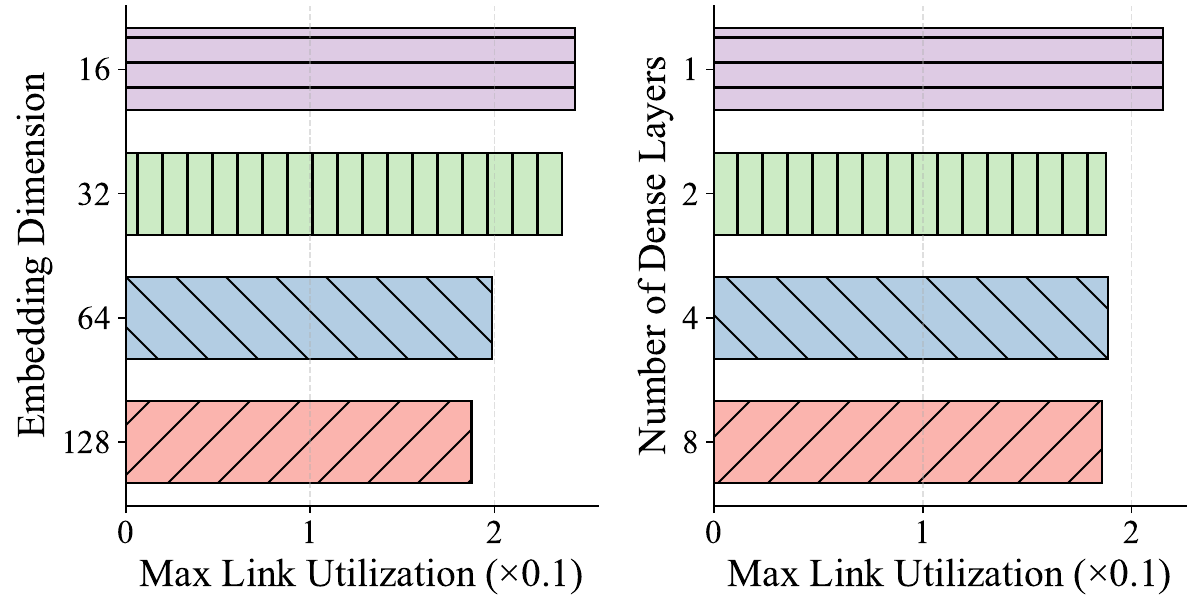}
    }
    \caption{Sensitivity analysis of \textsc{Lmte}’s hyperparameters on Abilene.}
\end{figure}

\paragraph{Sensitivity Analysis}
We further analyze the sensitivity of \textsc{Lmte}’s performance to key hyperparameters. Specifically, Figure~\ref{fig:input} examines the impact of varying the number of RNN and GNN layers, while Figure~\ref{fig:output} explores the effects of changing the number and dimension of dense layers. We can observe that: \circled{1} With respect to input alignment, RNN complexity plays a less significant role than GNN complexity, as large-scale topologies generally necessitate deeper GNN layers to capture their structural information. \circled{2} Regarding output decision-making, the width of a neural network has a greater impact than its depth. For instance, even a two-layer MLP can approximate the optimal solution, while increasing the hidden size continues to improve performance.

\begin{figure}
    \centerline{\includegraphics[width=0.81\linewidth]{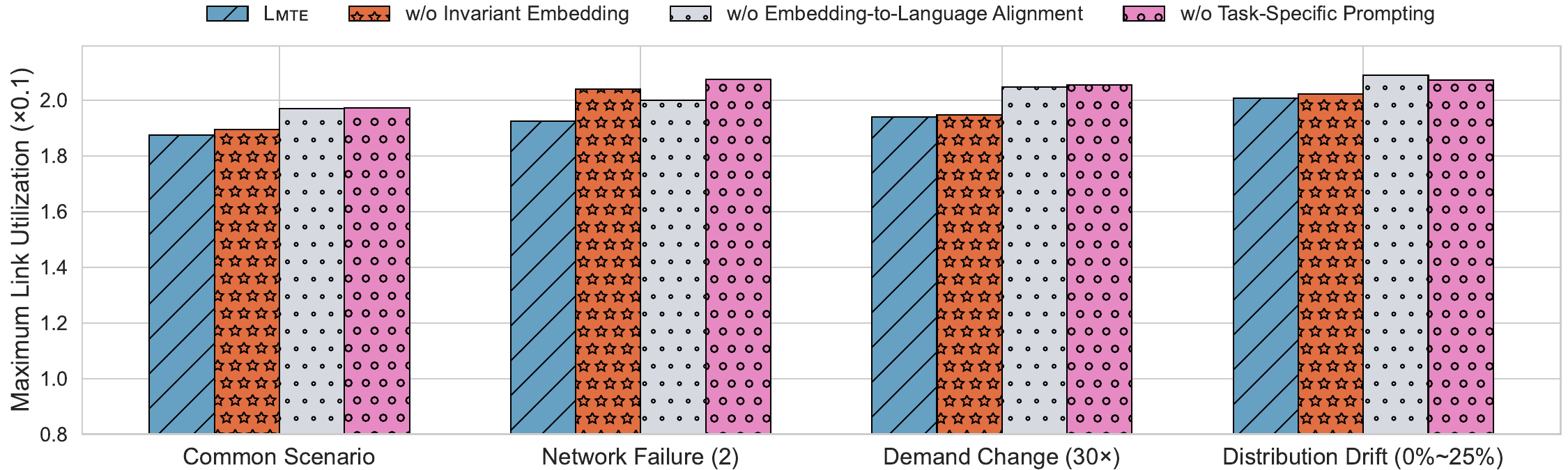}}
    \caption{Ablation study of \textsc{Lmte}’s three key designs on Abilene.}
    \label{fig:ablation_study}
\end{figure}

\paragraph{Ablational Study}
Figure~\ref{fig:ablation_study} presents an ablation study evaluating the impact of \textsc{Lmte}’s key components (i.e., embedding in \S\ref{comp1}, alignment in \S\ref{comp2} and prompting in \S\ref{comp3}) on its overall performance. We find that each component of the proposed framework contributes meaningfully to enhancing its TE quality, either in one scenario or across multiple scenarios. First, \textsc{Lmte} w/o Invariant Embedding exhibited a 10\% increase in MLU upon 2 link failures, demonstrating that our GNN-based embedding approach can effectively handle topological changes. Second, \textsc{Lmte} w/o Embedding-to-Language Alignment resulted in an average performance degradation of 11\%, which underscores the importance of modality alignment. Third, \textsc{Lmte} w/o Task-Specific Prompting led to an 8\%–15\% increase in MLU, because the language model primarily understands the task and scenario through the input natural language (i.e., the prompt). This demonstrates that our design is interpretable.

\section{Conclusion}
\label{final_sec}

This work provided an initial yet comprehensive exploration of pre-trained language models for WAN TE, aiming to bring advanced reasoning capabilities to the next choice of high-quality configurations both in theory and in practice. We developed \textsc{Lmte}, the first LM-driven TE framework that combines carefully designed, highly interpretable components. 
\textsc{Lmte} produces first-rate routing decisions while offering superior generalization compared to SOTA TE approaches. Though not the final answer, \textsc{Lmte} marks a meaningful step toward sustainable networking solutions powered by LMs. 
We hope this new paradigm can promote future network optimization research. The authors have provided public access to their code and data at \href{https://github.com/Y-debug-sys/LMTE}{\texttt{https://github.com/Y-debug-sys/LMTE}}.

\section{Acknowledgments}

The work was supported by the Natural Science Foundation of China (No. 62572168, 62372149, U23A20303, 92373205), Key Laboratory of Knowledge Engineering with Big Data (the Ministry of Education of China) (BigKEOpen2025-04), Anhui Provincial Natural Science Fund (2508085MF151), Key Research and Development Program of Zhejiang Province under Grant (No. 2025C02103), and National Key Research and Development Program of China (No. 2023YFB4404401).


\bibliographystyle{unsrt}  
\bibliography{main}  






\newpage

\appendix

\part*{Appendix} 

\section{Notations \& Useful Lemmas}

\begin{definition}
    \textbf{(Two-Commodity Integral Flow, D2CIF)} Given a directed graph $\mathcal{G} = (\mathcal{V}, \mathcal{E})$ with edge capacities $c_e \in \mathbb{N}$ for all $e \in \mathcal{E}$, two source-terminal pairs $(s_1, t_1)$ and $(s_2, t_2)$, and a positive integer requirement $R$, \textit{determine whether there exist integral flow functions} $f_1, f_2 \colon \mathcal{E} \to \mathbb{N}$ such that the following conditions hold: (i) The capacity constraint $f_1(e) + f_2(e) \leq c_e$ is satisfied for all $e \in \mathcal{E}$, and (ii) for each commodity $i \in \{1, 2\}$, the net flow $F_i$ out of source $s_i$, defined as $F_i = \sum_{\substack{e \in \mathcal{E}: \mathsf{head}(e) = s_i}} f_i(e) - \sum_{\substack{e \in \mathcal{E}: \mathsf{tail}(e) = s_i}} f_i(e)$, satisfies the total flow requirement $F_1 + F_2 = R$.
    \label{define:1}
\end{definition}

\begin{definition}
    \textbf{(Useful Notations for Analytical Optimality)} Let $\mathcal{H}$ denote the finite set of all possible history realizations. Let state $\pi : \mathcal{H} \to \mathcal{R}$ be a mapping from the finite set of all possible history realizations to a TE configuration. We define the expected loss (e.g. expected MLU) of a TE state $\pi$ as $\tilde{\mathcal{L}}(\pi) = \mathbb{E} \left[ \mathcal{L}(\pi, \mathcal{D}_t) \right]$, and let the optimal TE state be $\pi^* \in \arg\min_\pi \tilde{\mathcal{L}}(\pi)$.
    \label{define:2}
\end{definition}

\begin{lemma}
    WAN traffic engineering for MLU is at least as hard as the Two-Commodity Integral Flow problem.
    \label{lemma:2}
\end{lemma}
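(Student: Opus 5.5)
We prove Lemma~\ref{lemma:2} by a polynomial-time reduction from D2CIF (Definition~\ref{define:1}), which is NP-complete (Even, Itai and Shamir; Garey and Johnson, problem ND38). Given a D2CIF instance $(\mathcal{G},c,(s_1,t_1),(s_2,t_2),R)$, we build a WAN TE instance whose optimal MLU is at most $1$ if and only if the D2CIF instance is a yes-instance. This shows that the decision version of MLU-TE, ``is there a configuration $\mathcal{R}$ with $\alpha\le 1$?'', is at least as hard as D2CIF.

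\textbf{Construction.} We keep the graph and capacities of $\mathcal{G}$. We then set the demands and tunnels as follows.

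First, we add a super-source $s$ and a super-sink $t$. There are arcs $s\to s_1$ and $s\to s_2$, and arcs $t_1\to t$ and $t_2\to t$, each with a large capacity such as $R$. We pose a single aggregated demand $\mathcal{D}_{s,t}=R$. This turns the requirement $F_1+F_2=R$ into a demand-satisfaction constraint, because in Eqn.~\ref{eqn:mlu_formulte} every demand is fully routed ($\sum_p r_p=1$) and therefore appears on the links.

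Second, we must respect the separation of the two commodities: flow leaving $s_1$ must reach $t_1$, not $t_2$. We handle this through the tunnel set. We let $P_{s,t}$ consist only of paths of the form $s\to s_1\rightsquigarrow t_1\to t$ or $s\to s_2\rightsquigarrow t_2\to t$. Fixed tunnels therefore encode commodity identity, which a free multicommodity flow would not.

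Third, we have to bridge fractional split ratios and integral flows. Here we invoke the unit-capacity special case of D2CIF, which remains NP-complete. In that case an integral flow decomposes into edge-disjoint unit paths. We then replace each tunnel-splitting choice with a choice among tunnels that carry unit demands. Concretely, we use $R$ unit demands (distinct OD copies through gadget nodes) and tunnel sets given by simple paths. We require $\alpha\le 1$ with integrality enforced by unsplittable routing: each unit demand is restricted to a single tunnel, i.e.\ $r_p\in\{0,1\}$. This restriction is the standard setting under which MLU-TE hardness statements are made (cf.\ \cite{abuzaid2021contracting}).

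\textbf{Equivalence.} For the forward direction, a feasible integral pair $(f_1,f_2)$ decomposes, by flow decomposition, into $R$ unit paths, $F_1$ of them from $s_1$ to $t_1$ and $F_2$ from $s_2$ to $t_2$. Assigning each unit demand to one of these paths gives link loads $f_1(e)+f_2(e)\le c_e$, hence $\alpha\le1$. For the converse, a configuration with $\alpha\le1$ yields link loads whose sums per commodity are integral flows satisfying (i), and the total routed volume gives $F_1+F_2=R$, which is (ii).

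The main obstacle is the path-enumeration issue. The tunnel set $P_{s,t}$ may contain exponentially many paths, so listing all of them would break polynomiality. We plan to resolve this by first subdividing $\mathcal{G}$ into a layered or gadgetized form of polynomial size in which the relevant simple $s_i$--$t_i$ paths are polynomially many. Failing that, we will state the hardness for the TE formulation with implicitly given tunnels, that is, the edge-based formulation. That formulation coincides with Eqn.~\ref{eqn:mlu_formulte} when every path is allowed, and it is the form used in Lemma~\ref{lemma:1}.
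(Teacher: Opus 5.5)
The decisive gap is your third step: imposing $r_p\in\{0,1\}$ replaces the problem in the lemma by a different one. In Eqn.~\ref{eqn:mlu_formulte} the split ratios are real, $r_p\ge 0$. With the tunnels given, the program is a linear program in $(r,\alpha)$ (replace the maximum by the constraints $f_e\le\alpha c_e$). With all paths allowed, it has the same optimum as the edge-based fractional multicommodity-flow LP. So in either reading it is solvable in polynomial time.

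Your converse direction, ``a configuration with $\alpha\le 1$ yields integral flows'', is therefore false for the actual TE program. A fractional two-commodity routing can fit within the capacities when no integral one exists, and that is exactly where the hardness of D2CIF lives. With unsplittable unit demands your equivalence is valid, but it is an equivalence for \emph{unsplittable} MLU routing. For Eqn.~\ref{eqn:mlu_formulte} itself, no polynomial reduction from D2CIF can exist unless $\mathrm{P}=\mathrm{NP}$. So you must state the restriction as a change of problem, not as the standard setting.

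For comparison, the paper's proof uses the same skeleton as yours: fix total demand $R$ and ask whether the minimum MLU is at most $1$. It then simply asserts that a routing with $\alpha\le 1$ ``is a feasible solution to D2CIF''. It never addresses integrality or how the two commodities are kept apart, so it has the same hole; your proposal at least locates it.

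Even for the unsplittable variant, your reduction is not yet polynomial, and neither proposed repair works. Each unit demand's tunnel set must contain every simple $s_i$--$t_i$ path, and there can be exponentially many. The chained lobes in the construction behind Lemma~\ref{lemma:3} already give $2^n$ paths for the first commodity on a formula with $n$ variables. Subdividing edges leaves the number of simple paths unchanged, and you name no gadget that would reduce it.

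The fallback to implicit tunnels undoes your second step. If every path between your gadget endpoints is allowed, a unit demand may travel $s_1\rightsquigarrow t_2$ and commodity identity is lost. The question then collapses to whether a single-commodity integral flow of value $R$ from $\{s_1,s_2\}$ to $\{t_1,t_2\}$ exists, which max-flow decides in polynomial time. Dropping integrality as well lands you back at an LP.

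For the integral variant, two routes do work. One is to keep $(s_1,t_1)$ and $(s_2,t_2)$ as two separate OD pairs with integral edge flows and demands $R_1+R_2=R$. Under unit capacities only polynomially many splits need to be tried, and each TE decision instance becomes a D2CIF instance verbatim. The other is to reduce from 3SAT directly, with explicit tunnel sets of size two per variable demand and three per clause demand. This avoids path enumeration altogether.
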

\begin{proof}    
    We reduce the D2CIF decision problem to a special case of the MLU optimization problem. Set up the TE MLU problem with fixed total demand $R$ and minimize the utilization $\alpha = \max_{e \in \mathcal{E}} \frac{f_e}{c_e}$. Then the key observation is: (i) If \( \alpha \leq 1 \), then no edge exceeds its capacity, meaning the routing is a feasible solution to D2CIF that sends full demand \( R \). (ii) If \( \alpha > 1 \), then at least one edge is overloaded — meaning it's impossible to satisfy the D2CIF requirement without violating capacity. Therefore, the decision version of the MLU problem: "Is the minimum possible MLU \( \leq 1 \)?" is exactly equivalent to solving D2CIF. But the full MLU problem goes further — it not only decides feasibility, but also optimizes to find the least congested way of routing total flow \( R \). Hence, the MLU optimization problem is at least as hard as D2CIF.
\end{proof}

\begin{lemma}
    The Two-Commodity Integral Flow problem is NP-complete, even when all edges simply have capacity 1.
    \label{lemma:3}
\end{lemma}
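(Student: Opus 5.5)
Lemma~\ref{lemma:3} is the classical result of Even, Itai and Shamir (1976). The plan is a polynomial-time reduction from 3-SAT to D2CIF in which every edge has capacity $1$. Membership in NP is immediate: given integral $f_1,f_2$, checking the capacity constraints and computing $F_1+F_2$ takes time linear in $|\mathcal{E}|$. The substance is NP-hardness.

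Fix a 3-SAT instance with variables $x_1,\dots,x_n$ and clauses $C_1,\dots,C_m$. Let $p_i$ and $\bar p_i$ be the numbers of occurrences of $x_i$ and $\neg x_i$. The construction has three parts.
\begin{itemize}
\item \emph{Variable gadgets.} Commodity~1 runs from $s_1$ through a chain of gadgets to $t_1$. For each $x_i$ the gadget offers two internally disjoint unit-capacity paths, a ``true'' path with $\bar p_i$ intermediate nodes and a ``false'' path with $p_i$ intermediate nodes. Consecutive gadgets are joined by single unit edges, so commodity~1 can push at most one unit. Pushing that unit amounts to choosing exactly one path per variable, which is a truth assignment.
\item \emph{Clause demands.} Commodity~2 must deliver $m$ units, one per clause. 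From $s_2$ there is a unit edge to a clause node $c_j$, and from $c_j$ edges lead to the gadget paths of the literals in $C_j$. The edges are arranged so that a unit for $C_j$ can enter the path of a satisfied literal. That path is unused by commodity~1, so its edges are free, and the unit can then exit toward $t_2$.
\item \emph{Requirement.} Set $R=m+1$.
\end{itemize}

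The correctness argument has two directions. If the formula is satisfiable, route commodity~1 along the paths chosen by the assignment, and route each clause unit through the free segment of a true literal's path. This is feasible with $F_1+F_2=m+1$. Conversely, suppose a flow achieves $R=m+1$. Since the cut at $s_2$ has capacity $m$ and the chain limits commodity~1 to one unit, we get $F_1=1$ and $F_2=m$. Integrality and unit capacities then force commodity~1 onto a single path, which defines an assignment, and force each clause unit through a path segment that commodity~1 does not use. That segment belongs to a literal the assignment makes true, so every clause is satisfied.

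The main obstacle is the gadget wiring in the converse direction. Commodity~2 must be unable to shortcut: it must not travel along gadget paths between different clause entry points, and it must not share the commodity-1 segment in ways that fake satisfaction. First I would make each literal occurrence its own two-edge detour, entering at a distinct node and exiting directly to $t_2$. I would then check that any clause unit entering a path occupied by commodity~1 would need an edge already saturated, using a cut-counting argument on each occurrence node. If that case analysis gets messy, I would instead cite Even--Itai--Shamir directly. Either way, combined with Lemma~\ref{lemma:2}, this gives the NP-hardness half of Lemma~\ref{lemma:1}.
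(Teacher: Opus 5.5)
Your skeleton is the Even--Itai--Shamir reduction that the paper simply cites: lobes encode an assignment and $R=m+1$. Your counting is also right. The bridges give $F_1\le 1$ and the $m$ unit edges out of $s_2$ give $F_2\le m$, so the paper's sum-form requirement $F_1+F_2=m+1$ forces $F_1=1$ and $F_2=m$.

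The gap is the converse direction, which is the only nontrivial part. Your claim that integrality and unit capacities ``force each clause unit through a path segment that commodity~1 does not use'' is asserted, and the gadget that would enforce it is never pinned down. In your description, occurrences are intermediate \emph{nodes} of the lobe paths, and a clause unit enters an occurrence node and then leaves ``directly to $t_2$''. Capacities in D2CIF are on edges only. If commodity~1 passes through such a node $w$, it uses the two path edges at $w$, while the clause unit uses $c_j\to w$ and $w\to t_2$. Both commodities are conserved at $w$ and no edge is overloaded. So every clause can be served under every assignment, and unsatisfiable formulas map to yes-instances. A cut count at $w$ cannot detect a conflict, since $w$ has in- and out-capacity $2$. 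If your ``two-edge detour'' enters and leaves at the same node, it fails for exactly this reason.

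The missing idea is node splitting. Represent each occurrence by an \emph{edge} $(u',u'')$ of its literal's lobe path. Add $c_j\to u'$ and $u''\to t_2$, and make $(u',u'')$ the only edge leaving $u'$ and the only edge entering $u''$. A unit entering at $u'$ must then cross $(u',u'')$, which is saturated exactly when commodity~1 takes that side of the lobe.

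The converse is then immediate. The graph is acyclic, $s_2$ has no in-edges and $t_2$ has no out-edges. Hence commodity~1's single unit is one $s_1$--$t_1$ path that traverses exactly one full side of each lobe, defining the assignment. Each of the $m$ clause units must cross an unsaturated occurrence edge of one of its own literals, i.e.\ a true literal.

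Your ``shortcut'' worry then needs no case analysis. Whatever free edges a unit uses after its entry edge are irrelevant, because crossing that edge already certified its clause. Falling back on citing Even--Itai--Shamir, as you suggest, is exactly what the paper does.
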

\begin{proof}
    The proof of Lemma~\ref{lemma:3} can be concluded from Theorem 3 of~\cite{even1975complexity}, where the NP-completeness of Simple D2CIF is shown via reduction from 3SAT (Boolean 3-Satisfiability Problem~\cite{cook1971complexity}). The construction maps variables to lobe structures (with upper/lower paths as assignments) and clauses to sinks, where achieving \(1+k\) flow (for \(k\) clauses) corresponds to 3SAT satisfiability: the first commodity's path choice encodes assignments while the second's \(k\) units validate clauses. This solution-preserving reduction establishes the result.
\end{proof}

\begin{lemma}
    There exists a finite number of transitions $T$ from any arbitrary initial state, such that the expected suboptimality of the average state $\bar{\pi} = \frac{1}{T} \sum_{t=1}^{T} \pi_t$ satisfies:
    \begin{equation}
        \left| \tilde{\mathcal{L}}\left( \bar{\pi} \right) - \tilde{\mathcal{L}}(\pi^*) \right| \le \varepsilon, \quad \forall \varepsilon > 0.
    \end{equation}
    \label{lemma:4}
\end{lemma}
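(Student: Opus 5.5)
We treat the transitions of the TE automaton as the iterates of a projected (sub)gradient method on the expected loss $\tilde{\mathcal{L}}$. We then invoke the standard online/convex optimization regret bound for the averaged iterate, in the spirit of the analysis behind DOTE~\cite{perry2023dote}. Concretely, a state $\pi$ is a collection of split ratios for each history realization in the finite set $\mathcal{H}$. The feasible set $\Pi$ is therefore a finite product of probability simplices (one per OD pair and per history), so it is compact and convex with finite diameter $B = \sup_{\pi,\pi' \in \Pi}\|\pi - \pi'\|_2$. For a fixed demand $\mathcal{D}_t$, the map $\pi \mapsto \mathcal{L}(\pi,\mathcal{D}_t)=\max_e f_e/c_e$ is a maximum of functions that are linear in the split ratios. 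It is therefore convex and Lipschitz, with constant $G$ bounded by $\max_{s,t}\mathcal{D}_{s,t}/\min_e c_e$ times a combinatorial factor depending on the tunnel sets. Taking expectation over the (finitely supported, or at least bounded) demand distribution preserves both convexity and Lipschitzness, so $\tilde{\mathcal{L}}$ is convex and $G$-Lipschitz on $\Pi$. This is exactly the convexity property highlighted in Figure~\ref{fig:te_problem_2}.

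Next we define the transition $\delta$. Each input symbol $\sigma_t$ selects the relevant OD pairs and supplies the updated tunnel weights, and we define $\pi_{t+1} = \mathrm{Proj}_\Pi\bigl(\pi_t - \eta\, g_t\bigr)$, where $g_t$ is a subgradient of $\mathcal{L}(\cdot,\mathcal{D}_t)$ at $\pi_t$. The projection decomposes into independent simplex projections, one per OD pair, which matches the per-pair update picture of Figure~\ref{fig:automata}. We then apply the usual telescoping argument: by nonexpansiveness of the projection,
\[
\|\pi_{t+1}-\pi^*\|^2 \le \|\pi_t-\pi^*\|^2 - 2\eta\langle g_t,\pi_t-\pi^*\rangle + \eta^2 G^2 .
\]
Convexity gives $\langle g_t,\pi_t-\pi^*\rangle \ge \mathcal{L}(\pi_t,\mathcal{D}_t)-\mathcal{L}(\pi^*,\mathcal{D}_t)$. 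Summing over $t=1,\dots,T$ and choosing $\eta = B/(G\sqrt{T})$ then yields a regret bound of $BG\sqrt{T}$.

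Now we take expectations and pass to the averaged state. Here $\pi_t$ depends only on $\mathcal{D}_1,\dots,\mathcal{D}_{t-1}$, and $\mathcal{D}_t$ is a fresh draw, so $\mathbb{E}[\mathcal{L}(\pi_t,\mathcal{D}_t)] = \mathbb{E}[\tilde{\mathcal{L}}(\pi_t)]$. Jensen's inequality for convex $\tilde{\mathcal{L}}$ gives
\[
\mathbb{E}\,\tilde{\mathcal{L}}(\bar\pi) - \tilde{\mathcal{L}}(\pi^*) \le \frac{BG}{\sqrt{T}} .
\]
The left-hand side is nonnegative because $\pi^*$ is a minimizer, so the absolute value in the statement is harmless. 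Choosing $T \ge (BG/\varepsilon)^2$, which is finite for every $\varepsilon>0$, gives the claim. A deterministic variant, with full subgradients of $\tilde{\mathcal{L}}$, avoids the expectation entirely and gives the same bound pathwise.

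The main obstacle is the modelling step rather than the optimization bound. Two issues must be handled carefully. First, the paper's automaton requires a finite state set $\mathcal{Q}$, whereas projected gradient descent lives on a continuum. We would handle this by observing that the reachable set after $T$ steps from a fixed $q_0$ with finitely many possible inputs is itself finite, so the construction remains a genuine automaton. Alternatively, we would discretize $\Pi$ at resolution $\varepsilon/(2G)$ and absorb the extra error into $\varepsilon$. Second, the independence and boundedness of demands must be made explicit in order to justify the expectation step. The Lipschitz constant $G$ also depends on instance data, so it is finite but not uniform across instances. We would state the result per instance, which is all that the statement (and its use in Lemma~\ref{lemma:1}) requires.
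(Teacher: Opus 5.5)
Your proposal is correct and follows the same route as the paper: projected (sub)gradient transitions on the convex expected-MLU objective with an averaged iterate. The one difference is that you derive the $BG/\sqrt{T}$ bound yourself by the standard telescoping argument, where the paper simply invokes Theorem~1 of DOTE~\cite{perry2023dote}. Your treatment of the expectation is also more careful than the paper's: bounding $\mathbb{E}\,\tilde{\mathcal{L}}(\bar\pi)$ (which by Jensen also controls the paper's $\tilde{\mathcal{L}}(\mathbb{E}\,\bar\pi)$), or using full subgradients of $\tilde{\mathcal{L}}$ for a pathwise bound, is valid, whereas the paper drops the expectation by asserting that batch size one gives $\mathbb{E}\bigl(\frac{1}{K}\sum_{k}\pi_k\bigr)=\frac{1}{K}\sum_{k}\pi_k$, which does not hold for random iterates.
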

\begin{proof} 
    By leveraging the convexity of the TE objectives, the lemma’s proof implies a stochastic gradient descent (SGD~\cite{shapiro2021lectures})-style transition rule for convergence to the optimal solution: starting from arbitrary initial states, the states are iteratively adjusted in the negative gradient direction until reaching the global minimum. Formally, the state transition is defined as:  
    \[
        \pi_t = \delta(\pi_{t-1}, \sigma_t) := \text{Proj}\left\{ \pi_{t-1} - \eta v_t \right\},  
    \] 
    where \( v_t \in \partial \mathcal{L}(\pi_t, \mathcal{D}_t) \) is a subgradient of the TE objective, \(\text{Proj}\{\cdot\}\) denotes the projection onto the simplex for each OD pair, and \(\eta\) is the step size. Then according to Theorem 1 of~\cite{perry2023dote}, we have: For any target accuracy \(\varepsilon > 0\), there exist a learning rate \(\eta > 0\) and a finite iteration horizon \(K\) such that $\left| \tilde{\mathcal{L}} \left[ \mathbb{E} \left( \frac{1}{K} \sum_{k=1}^{K} \pi_k \right) \right] - \tilde{\mathcal{L}}(\pi^*) \right| \le \varepsilon$, where the expectation is taken with respect to the stochastic sampling process of the algorithm. Since the batch size equals 1 in our scenario, we have $\mathbb{E} \left( \frac{1}{K} \sum_{k=1}^{K} \pi_k \right) = \frac{1}{K} \sum_{k=1}^{K} \pi_k$. Therefore, there exists a finite number of gradient-based transitions $T$ such that $\left| \tilde{\mathcal{L}}\left( \bar{\pi} \right) - \tilde{\mathcal{L}}(\pi^*) \right| \le \varepsilon$, which concludes the proof.
\end{proof}

\begin{lemma}
    (Universal Approximation Theorem~\cite{cybenko1989approximation,pinkus1999approximation}) Let \( K \subset \mathbb{R}^d \) be a compact set, and let
    \[
        \mathcal{H}=
            \left\{ x \mapsto \sum_{i=1}^{m} a_i , \sigma(w_i^\top x + b_i);\middle|; m \in \mathbb{N},\ a_i \in \mathbb{R},\ w_i \in \mathbb{R}^d,\ b_i \in \mathbb{R} \right\}
    \]
    denote the class of single-hidden-layer neural networks with activation \( \sigma : \mathbb{R} \to \mathbb{R} \). If \( \sigma \) is continuous and non-polynomial, then \( \mathcal{H} \) is dense in \( C(K) \) under the uniform norm. That is, for any \( f \in C(K) \) and any \( \varepsilon > 0 \), there exists \( h \in \mathcal{H} \) such that
    \(
       \sup_{x \in K} |f(x) - h(x)| < \varepsilon.
    \)
    \label{lemma:5}
\end{lemma}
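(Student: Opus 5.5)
The statement is the Universal Approximation Theorem (Lemma~\ref{lemma:5}). I would prove it by the classical Hahn--Banach/Riesz argument of Cybenko, extended to general non-polynomial activations via Leshno--Lin--Pinkus--Schocken.

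\textbf{Reduction to one dimension.} The closure $\overline{\mathcal{H}}$ in $C(K)$ is a linear subspace, so density is equivalent to the statement that the only finite signed Borel measure $\mu$ on $K$ annihilating every ridge function $x\mapsto\sigma(w^\top x+b)$ is $\mu=0$. I would show something stronger: that the span of $\{\sigma(\lambda t+\theta)\}$ is dense in $C([a,b])$ for every compact interval $[a,b]\subset\mathbb{R}$. Granting this, every function $g(w^\top x)$ with $g\in C(\mathbb{R})$ lies in $\overline{\mathcal{H}}$, because $w^\top x$ ranges over a compact interval as $x$ ranges over $K$. In particular $\cos(w^\top x)$ and $\sin(w^\top x)$ lie in $\overline{\mathcal{H}}$. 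If $\mu$ annihilates $\mathcal{H}$, its Fourier transform $\hat\mu(w)=\int_K e^{-iw^\top x}\,d\mu(x)$ then vanishes for all $w$, so $\mu=0$. Alternatively, one can use that ridge polynomials $(w^\top x)^k$ span all polynomials and then apply Stone--Weierstrass.

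\textbf{The one-dimensional core.} First assume $\sigma\in C^\infty$. The difference quotients $\big(\sigma((\lambda+h)t+\theta)-\sigma(\lambda t+\theta)\big)/h$ are finite combinations of elements of the span and converge uniformly on $[a,b]$ to $t\,\sigma'(\lambda t+\theta)$. Iterating this gives $t^k\sigma^{(k)}(\lambda t+\theta)$ in the closure, and setting $\lambda=0$ gives $t^k\sigma^{(k)}(\theta)$. Since $\sigma$ is not a polynomial, for each $k$ there is some $\theta_k$ with $\sigma^{(k)}(\theta_k)\neq 0$; otherwise $\sigma^{(k)}\equiv 0$ and $\sigma$ would be a polynomial of degree below $k$. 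Hence every monomial $t^k$ lies in the closure, and Weierstrass finishes this case.

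For a general continuous $\sigma$, mollify: $\sigma*\varphi$ with $\varphi\in C_c^\infty$ is a uniform limit on compacts of Riemann sums $\sum_j \varphi(y_j)\sigma(t-y_j)\Delta y$, so $\sigma*\varphi$ lies in the closure, and by the smooth case its translates and dilates span a dense set.

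\textbf{Main obstacle.} The hard step is showing that some mollification $\sigma*\varphi$ is itself non-polynomial. I would argue by contradiction. If $\sigma*\varphi$ were a polynomial for every $\varphi\in C_c^\infty$, a Baire category argument on the Fréchet space $C_c^\infty([-1,1])$ shows the degrees are uniformly bounded by some $m$. Then $(\sigma*\varphi)^{(m+1)}=0$ for all $\varphi$, which means $\sigma^{(m+1)}=0$ in the sense of distributions. So $\sigma$ is a polynomial, contradicting the hypothesis. I expect this Baire step to need the most care, and I would follow Pinkus (1999, Prop.~3.7) closely there.
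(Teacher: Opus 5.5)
Your proof is correct. Its outer shell is the same as the paper's: Hahn--Banach plus Riesz reduce density to showing that a finite signed measure $\mu$ on $K$ with $\int_K \sigma(w^\top x+b)\,d\mu(x)=0$ for all $w,b$ must vanish. The two routes part ways at exactly that step, which is the only one carrying real content.

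The paper settles it in one sentence by calling the family $\{\sigma(w^\top x+b)\}$ ``discriminatory in the sense of Cybenko''. Cybenko, however, proves discriminatoriness only for bounded measurable sigmoidal activations. For a general continuous non-polynomial $\sigma$ (ReLU, say), being discriminatory on $K$ is, by the very duality just invoked, equivalent to the density claim itself. So the paper's argument is in effect a pointer to the literature, and the non-polynomial hypothesis is named there but never used.

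You supply the missing content along the Leshno--Lin--Pinkus--Schocken/Pinkus line:
\begin{itemize}
\item reduction to univariate density via ridge functions and Fourier uniqueness;
\item differentiation in the weight for smooth $\sigma$, with $\lambda=0$ extracting $t^k\sigma^{(k)}(\theta)$;
\item mollification to reach general continuous $\sigma$;
\item a Baire-category argument showing that some $\sigma*\varphi$ is non-polynomial.
\end{itemize}
This gives a self-contained proof in which the non-polynomial assumption genuinely does the work. Your Stone--Weierstrass variant even makes the measure-theoretic duality unnecessary. Univariate density already places every ridge polynomial $(w^\top x)^k$, and hence every polynomial, in $\overline{\mathcal{H}}$.

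The paper's version is shorter but only as strong as the reference it leans on, and that reference does not cover the stated generality. Yours is longer but closes that gap; the Baire step is the one place needing the care you anticipate.
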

\begin{proof}
    We prove the result by contradiction. Assume that \( \mathcal{H} \) is not dense in \( C(K) \). Then there exists a nonzero continuous linear functional \( L : C(K) \to \mathbb{R} \) such that
    \begin{equation}
        L(h) = 0, \quad \forall h \in \mathcal{H}.
        \label{tag1}
    \end{equation}
    By the Riesz representation theorem, there exists a finite signed Borel measure \( \mu \) on \( K \) satisfying
    \begin{equation}
        L(g) = \int_K g(x), d\mu(x), \quad \forall g \in C(K).
        \label{tag2}
    \end{equation}
    Combining (\ref{tag1}) and (\ref{tag2}), we obtain
    \begin{equation}
        \int_K \sigma(w^\top x + b), d\mu(x) = 0, \quad \forall w \in \mathbb{R}^d,\ b \in \mathbb{R}.
        \label{tag3}
    \end{equation}
    Since \( \sigma \) is continuous and non-polynomial, the family \( {\sigma(w^\top x + b)}_{w,b} \) is discriminatory in the sense of \cite{cybenko1989approximation}, implying that the only finite signed measure satisfying (\ref{tag3}) is the zero measure \( \mu \equiv 0 \). This contradicts the assumption that \( L \) is nonzero. Therefore, \( \mathcal{H} \) is dense in \( C(K) \), completing the proof.
\end{proof}

\section{Proof of Lemma~\ref{lemma:1}}
\label{app:lem1}

\begin{proof}
    At a high level, the proof of Theorem~\ref{lemma:1} consists of two key components: (i) establishing the NP-hardness of the original TE formulation, and (ii) constructing a finite-step automata that guarantees convergence to the optimal configuration from arbitrary initial states. We start by making the observation told by Lemma~\ref{lemma:2} about connection between traffic engineering and Two-Commodity Integral Flow (D2CIF). Next, we can use Lemma~\ref{lemma:3} to show the NP-Completeness of simple D2CIF. Therefore, it follows that the TE problem under investigation is NP-hard. However, our second key concern is whether there exists a finite-length automaton state transition path that yields an optimal solution. The automata's operation circumvents NP-hardness, as presented in Lemma \ref{lemma:4}, by leveraging local gradient information \(v_t \in \partial \tilde{\mathcal{L}}(\pi_t, \mathcal{D}_t)\) and simplex projections to maintain feasibility, avoiding direct solution of the combinatorial problem. The finite-time \(\varepsilon\)-guarantee holds uniformly across initial states due to the TE objectives' convexity, which enables stochastic subgradient methods to bypass discrete flow allocation's inherent hardness. Interpreted as a Markov process, the automata's states \(\pi_t\) converge ergodically around \(\pi^*\). Thus, while exact solutions remain NP-hard, any TE instance can be efficiently approximated via finite automata simulations (Lemma \ref{lemma:4}), concluding the proof.
\end{proof}

\section{Proof of Theorem~\ref{theorem:2}}
\label{app:the2}

\begin{proof}[Proof Sketch]
    Adapted from Yun et al.~\cite{yun2020transformers}, the proof of Theorem~\ref{theorem:2} goes in three steps: (i) Approximate state-to-state mappings with step functions on compact domain. A step function is piecewise constant and can approximate continuous functions by leveraging Lemma~\ref{lemma:5}. Particularly, for any \( f \), there exists piece-wise constant function \( \tilde{f} \) such that \( d(f, \tilde{f}) \leq \epsilon/3 \). (ii) Approximate piece-wise constant functions with hardmax-based Transformers. Using hardmax operator simplifies the construction of piece-wise constant functions. According to Proposition~4 of~\cite{yun2020transformers}, for all \( \tilde{f} \), there exists the modified Transformer $g$ such that \( d(\tilde{f}, g) \le C_{\tilde{f}} \), where \( C_{\tilde{f}} \) is a constant associated with \( \tilde{f} \). (iii) Approximate hardmax-based Transformers with original Transformers. Due to softmax approximating hardmax (see Lemma~4 of~\cite{liu2023transformers}), we also have that the modified Transformer can be approximated with error \( \leq \epsilon/3 \). Finally, by choosing \( C_{\tilde{f}} \leq \epsilon/3 \) and applying the triangle inequality, we obtain  $d(f, h) \le {d(f, \tilde{f})} + {d(\tilde{f}, g)} + {d(g, h)} \le \epsilon$.
\end{proof}

\section{Proof of Theorem~\ref{theorem:3}}
\label{app:the3}

\begin{proof}
    At a high level, the proof of Theorem~\ref{theorem:3} consists of two parts: the composed transition chain \(\delta(\cdot, \sigma_T) \circ \cdots \circ \delta(\cdot, \sigma_1)\) can be evaluated in $\mathcal{O}\left(\log T\right)$-depth via a binary tree, and the shortcuts can be constructed with a high probability. The first part is a direct application of Theorem~1 of~\cite{liu2023transformers}. We here focus on proving the second part. Without loss of generality, we assume the TE state is a 1-dimensional vector for clarity.
    
    Define $S=\mathcal{O}\left( \log T \right) \ge 1$, sequential state vector $u \in \mathbb{R}^T$, random matrix $R \in \mathbb{R}^{S \times T}$ in which $R_{i,j} \sim \mathcal{N}\left( 0, 1 \right)$, and shortcut state vector $v \in \mathbb{R}^S$ in which $v_i = \frac{1}{\sqrt{S}} \sum_j R_{i,j}u_j$. Then let $x = \frac{\sqrt{S}}{\|u\|_2} v$ i.e. $x_i = \frac{R_i^T u}{\|u\|_2}$, we have $\|x\|_2^2 = \sum_{i=1}^S x_i^2 = \frac{S\|v\|_2^2}{\|u\|_2^2}$ and $x \sim \mathcal{N} \left(0, I_S\right)$. By Markov’s inequality i.e. $\Pr \left[ x \ge a \right] \le \frac{\mathbb{E}\left( x \right)}{a}$, for any $0 < \epsilon < 1$ and $0 < \lambda < 0.5$, we can write
    \[
        \begin{aligned}
           \Pr\left[\|v\|_2^2 \ge \left( 1 + \epsilon \right) \|u\|_2^2\right] = \Pr\left[ \frac{\|u\|_2^2}{S} \cdot \|x\|_2^2 \ge \left( 1 + \epsilon \right) \|u\|_2^2\right] &= \Pr\left[ e^{\lambda \|x\|_2^2}  \ge e^{\left( 1 + \epsilon \right) \lambda S} \right] \\ 
           &\le \frac{\mathbb{E}\left[ e^{\lambda \|x\|_2^2} \right]}{e^{\left( 1 + \epsilon \right) \lambda S}} = \left[ \frac{\mathbb{E}\left( e^{\lambda x_i^2} \right)}{e^{\left( 1 + \epsilon \right) \lambda S}} \right]^S
        \end{aligned}
    \]
    Using MGF of $\chi^2$ distribution i.e. $\mathbb{E}\left( e^{\lambda \chi_1^2} \right) = \frac{1}{\sqrt{1-2\lambda}}$, and setting $\lambda=\frac{\epsilon}{2\left( 1+\epsilon \right)}$, we can write
    \[
        \left[ \frac{\mathbb{E}\left( e^{\lambda x_i^2} \right)}{e^{\left( 1 + \epsilon \right) \lambda S}} \right]^S \le \left[ \frac{1}{\sqrt{1-2\lambda} \cdot e^{\left( 1 + \epsilon \right) \lambda}} \right]^S = e^{{S \left( \log \left( 1 + \epsilon \right) - \epsilon \right)}/{2}}
    \]
    Now using the inequality $\log \left( 1 + a \right) < a - a^2/2 + a^3/3$, and selecting an appropriate $S=\mathcal{O}\left( \log T \right)$,
    \[
       \Pr\left[\|v\|_2^2 \ge \left( 1 + \epsilon \right) \|u\|_2^2\right] \le e^{-2\log T} = \frac{1}{T^2}
    \]

    Similarly, by setting $\lambda = \frac{\epsilon}{2\left( 1 - \epsilon \right)}$, we get
    \[
        \begin{aligned}
           &\Pr\left[\|v\|_2^2 \le \left( 1 - \epsilon \right) \|u\|_2^2\right] = \Pr \left[ e^{-\lambda \| x \|_2^2} \ge e^{-\left( 1 - \epsilon \right) \lambda S} \right] \le \frac{\mathbb{E}\left( e^{-\lambda \| x \|_2^2} \right)}{e^{-\left( 1 - \epsilon \right) \lambda S}} \le \left[ \frac{1}{\sqrt{1+2\lambda} \cdot e^{\left( 1 - \epsilon \right) \lambda}} \right]^S = e^{S \left( \log \left( 1-\epsilon \right) - \epsilon \right) / 2}
        \end{aligned}
    \]
    Using the inequality $\log \left( 1-a \right) < -a - a^2/2$ and selecting an appropriate $S=\mathcal{O}\left( \log T \right)$,
    \[
        \Pr\left[\|v\|_2^2 \le \left( 1 - \epsilon \right) \|u\|_2^2\right] \le e^{-2\log T} = \frac{1}{T^2}
    \]
    Combining the above, we can obtain 
    \begin{equation}
        \Pr \left[ \left( 1-\epsilon \right) \|u\|_2^2 \le \|v\|_2^2 \le \left( 1+\epsilon \right) \|u\|_2^2 \right] > 1-\frac{2}{T^2}
        \label{JL_Lemma}
    \end{equation}
    Finally, by setting \( u = X_i - X_j \) and \( v = f(X_i) - f(X_j) \), where \( X_i, X_j \in \mathbb{R}^T \) and \( f: \mathbb{R}^T \rightarrow \mathbb{R}^S \) is the shortcut mapping, Eqn.~\ref{JL_Lemma} implies that pairwise distances are preserved under the mapping with high probability (close to 1 even for modest $T$). The model can thus simulate state transitions in a compressed space of dimension \( \mathcal{O}(\log T) \), completing our proof.
\end{proof}

\end{document}